\documentclass{lmcs}
\pdfoutput=1

\usepackage{lastpage}
\lmcsdoi{17}{3}{7}
\lmcsheading{}{\pageref{LastPage}}{}{}%
{Jan.~20,~2020}{Jul.~21,~2021}{}

\usepackage[utf8]{inputenc}

\usepackage{microtype}

\usepackage{amsmath}
\usepackage{amsthm}
\usepackage{wrapfig}
\usepackage{amssymb}
\usepackage{stmaryrd}
\usepackage{thmtools}
\usepackage{epsfig}
\usepackage{bbold}
\usepackage[T1]{fontenc}
\usepackage{multicol}
\usepackage{float}

\usepackage{graphicx}
\usepackage{sidecap}

\usepackage[textwidth=2cm,textsize=small]{todonotes}



\usepackage{pgf}
\usepackage{tikz}
\usetikzlibrary{arrows,decorations.pathmorphing,backgrounds,positioning,fit,calc,automata}
\usetikzlibrary{trees}
\usetikzlibrary{shapes}

\usepackage{multirow}

\newenvironment{claimproof}[1]{\par\noindent{\it Proof.}\space#1}{\hfill $\blacksquare$}


\newcommand{\NLogSpace}{\textsc{NLogSpace}}

\newcommand{\AL}{\Sigma}
\newcommand{\sem}[1]{{\llbracket{}{#1}\rrbracket}}

\newcommand{\SR}{\mathbb{S}}
\newcommand{\MD}{\mathbb{M}}

\newcommand{\sr}{S}
\newcommand{\add}{\oplus}
\newcommand{\mult}{\odot}

\newcommand{\zero}{\mathbb{0}}
\newcommand{\one}{\mathbb{1}}

\newcommand{\nat}{\mathbb{N}}
\newcommand{\natinf}{\mathbb{N} \cup \{\infty\}}
\newcommand{\natninf}{\mathbb{N} \cup \{-\infty\}}
\newcommand{\natpinf}{\mathbb{N} \cup \{+\infty\}}

\newcommand{\trop}{\nat_{\min,+}}

\newcommand{\arctic}{\nat_{\max,+}}
\newcommand{\natsemiring}{\nat_{+, \times}}
\newcommand{\bool}{\mathbb{B}}


\newcommand{\cA}{\mathcal{A}}

\newcommand{\trans}[2][]{\raisebox{-1pt}[10pt][0pt]{$\overset{#2}{\underset{^{#1}}{\raisebox{0pt}[3pt][0pt]{$\relbar\mspace{-8mu}\rightarrow$}}}$}}


\newcommand{\Run}{\operatorname{Run}}

\newcommand{\asem}[1]{{\sem{#1}}}

\abovedisplayskip=6pt
\belowdisplayskip=6pt

\tikzset{
defaultstyle/.style={>=stealth,semithick, auto,font=\small,
initial text= {},
initial distance= {3.5mm},
accepting distance= {3.5mm}},
accepting/.style=accepting by arrow,
nstate/.style={circle, semithick,inner sep=1pt, minimum size=4mm}}

\newcommand{\proper}{linear}
\newcommand{\compatible}{decomposable}
\newcommand{\wt}{\operatorname{wt}}

\begin{document}
\title{Pumping lemmas for weighted automata}
\titlecomment{{\lsuper*}A preliminary version of this paper appeared in the proceedings of the 35th Symposium on Theoretical Aspects of Computer Science (STACS)~\cite{MazowieckiR18}.}

\author[A.~Chattopadhyay]{Agnishom Chattopadhyay}
\address{Chennai Mathematical Institute, India}
\email{agnishom@cmi.ac.in}
\thanks{Agnishom Chattopadhyay was an intern at the University of
  Bordeaux, partially supported by the UMI ReLaX and the French-Indian
  project IoTTTA}

\author[F.~Mazowiecki]{Filip Mazowiecki}
\address{Max Planck Institute for Software Systems, Germany}
\email{filipm@mpi-sws.org}
\thanks{Filip Mazowiecki was partially supported by EPSRC grants
  EP/M011801/1 and EP/M012298/1. The work was partly done during a postdoctoral stay
  at the University of Bordeaux.}

\author[A.~Muscholl]{Anca Muscholl}
\address{Universit\'{e} de Bordeaux, France}
\email{anca@labri.fr}
\thanks{Anca Muscholl was partially supported by the ANR project DeLTA (ANR-16-CE40-0007).}

\author[C.~Riveros]{Cristian Riveros}
\address{Pontificia Universidad Cat\'olica de Chile, Santiago, Chile}
\email{cristian.riveros@uc.cl}
\thanks{Cristian Riveros was funded by ANID -- Millennium Science Initiative
Program -- Code ICN17\_002}

\begin{abstract}
We present pumping lemmas for five classes of functions definable by fragments of weighted automata over the min-plus semiring, the max-plus semiring and the semiring of natural numbers. As a corollary we show that the hierarchy of functions definable by unambiguous, finitely-ambiguous, polynomially-ambiguous weighted automata, and the full class of weighted automata is strict for the min-plus and max-plus semirings.
 \end{abstract}

\maketitle

\section{Introduction}

Weighted automata (WA for short) are a quantitative extension of finite state
automata used to compute functions over words.
They have been extensively studied since Sch\"{u}tzenberger's early
works~\cite{Schuetzenberger56}, see also~\cite{DrosteHWA09}. In particular, decidability
questions~\cite{Krob92,AlmagorBK11}, model
extensions~\cite{DrosteG07}, logical
characterisations~\cite{DrosteG07,kreutzer2013quantitative}, and
various applications~\cite{Mohri97,culik1993image} have been thoroughly
investigated in recent years.

The class of functions computed by WA enjoys several equivalent
representations in terms of automata and logics. Alur et
al.~introduced the model of cost register automata (CRA for
short)~\cite{alur2013regular,alur2013decision}, an alternative model
to compute functions over words inspired by programming paradigms, that recently received a lot of attention~\cite{kickasspaper,MazowieckiR19,DaviaudRT16,AlmagorCMP18}.
It was proved that the linear fragment of CRA is
equally expressive to WA.
Regarding logics, Droste and Gastin introduced in~\cite{DrosteG07} the so-called Weighted Logic (WL), a natural extension of monadic second order logics (MSO) from the boolean semiring to a commutative semiring.
It was shown in~\cite{DrosteG07} that a natural syntactic restriction of WL is equally expressive to WA, giving the first logical characterisation of~WA (see also~\cite{kreutzer2013quantitative}).

The decidability and complexity of various decision problems for WA
have also been investigated, unfortunately often with negative
results, such as for the equivalence
problem~\cite{Krob92,AlmagorBK11}. Another basic question, the decidability of the
determinisation problem for WA, is still open.
For this reason, research has focused on various fragments of WA over different semirings.
For example, over a one-letter alphabet, where WA are equivalent to
linear recurrences, some new decidability results were recently shown for limited fragments~\cite{OuaknineW14,OuaknineW14a,AkshayBMV020}.
Further restrictions of WA involve bounding their numbers of
runs. Among them, the  most studied classes are \emph{unambiguous automata}, \emph{finitely-ambiguous automata}, and \emph{polynomially-ambiguous automata}, where the numbers of accepting runs are bounded by one, a  constant, a polynomial in the size of input, respectively~\cite{Weber94,KlimannLMP04,KirstenL09}.
These subclasses of WA turned out to be robust, enjoying equivalent characterisations in terms of cost register automata~\cite{alur2013regular} and weighted logics~\cite{kreutzer2013quantitative}.

Although functions defined by WA and subclasses thereof have been studied
in terms of representations and decidability, not much is known about
expressibility issues.
Indeed, we are not aware of any general techniques to show if a
function is definable or not by some WA, or any of its subclasses.
Results related to the expressiveness of  WA usually require sophisticated arguments for each particular function~\cite{KlimannLMP04, kickasspaper} and there is no clear path to generalise these techniques.
For some semirings strict inclusions between unambiguous, finitely-ambiguous, polynomially-ambiguous, and the full class of WA are ``well-known'' to the community, but it is hard to find references to formal proofs (see related work below).
In contrast, for regular languages or first-order logics there exist
elegant and powerful techniques for showing inexpressibility, as for
example, the Myhill-Nerode congruence for regular
languages~\cite{HopcroftU79} or Ehrenfeucht-Fra\"{\i}ss\'{e} games~\cite{fraise1984quelques,ehrenfeucht1961application,libkin2013elements}
and aperiodic congruences for first-order logic~\cite{sch65}.
One would like to have similar techniques in the quantitative world
that simplify inexpressibility arguments for WA, CRA or WL.
Such techniques would help to understand the inner structure of these functions and unveil their limits of expressibility.

In this paper, we embark in the work of building an expressibility toolbox for weighted automata. We present five pumping lemmas, each of them for a  different class or subclass of functions defined by WA over the min-plus semiring, the max-plus semiring or the semiring of natural numbers. For each pumping lemma we show examples of functions that do not satisfy the lemma, giving very short inexpressibility proofs.
Our results do not attempt to fully characterise the class or subclasses of weighted automata in terms of pumping properties, nor to provide conditions that can be verified by a computer.
Our goal is to devise a systematic way to reason about expressibility of weighted automata and to provide simple arguments to show that functions do not belong to a given class.

\emph{Related work.} In~\cite{Kirsten08}, it is shown that over the
min-plus semiring polynomially-ambiguous automata are strictly more
expressive than finitely-ambiguous automata. In~\cite{KlimannLMP04}
strict inclusions between unambiguous automata, finitely-ambiguous
automata, and the full class of WA are shown over the max-plus
semiring. Using results from~\cite{ColcombetKMT16} one can
deduce that unambiguous automata are strictly included in the other
classes over the min-plus and max-plus semirings. More recently, \cite{DrosteG19} studied aperiodic WA over
arbitrary weights, relating fragments of aperiodic WA with various
degrees of ambiguity, and providing separating examples over the
min-plus, max-plus and the natural semiring. In all these papers the strict inclusions are shown by analysing
particular functions.  Gathering these
results we obtain strict inclusions between unambiguous automata,
finitely-ambiguous automata, and the full class of WA over the
min-plus semiring. However, to our knowledge, there is no reference
for a strict inclusion between polynomially-ambiguous automata and the
full class of WA. There is some work on the semiring of rational numbers with the usual
sum and product~\cite{MazowieckiR19,Barloy2019}. In these papers the polynomially-ambiguous fragment over the one-letter alphabet is characterised in terms of a fragment of
linear recurrence sequences. Both papers provide proofs that
polynomially-ambiguous weighted automata are strictly contained in the full class
of weighted automata over the semiring of rationals.

\emph{Differences with the conference version.} Compared
to~\cite{MazowieckiR18}, we present new pumping lemmas for the max-plus
semiring (Section~\ref{sec:max}) regarding finitely ambiguous and polynomially ambiguous max-plus automata. As a corollary we obtain a strict hierarchy of functions similar to the one for min-plus automata.

\emph{Organization}. In Section~\ref{sec:preliminaries} we introduce
weighted automata and some basic definitions. In
Section~\ref{sec:pump-sum} and Section~\ref{sec:fin_min} we present
pumping lemmas for weighted automata over the semiring of
natural numbers and its extension using the operation $\min$. In
Section~\ref{sec:pumping} we show the pumping lemma for
polynomially-ambiguous automata over the min-plus semiring, then we
turn to the max-plus semiring in Section~\ref{sec:max}. Concluding remarks can be found in Section~\ref{sec:conclusions}.

\section{Preliminaries}
\label{sec:preliminaries}
In this section, we recall the definitions of weighted automata. We start with the definitions that are standard in this area.
A monoid $\MD = (M,\otimes,\one)$ is a set $M$ with an associative operation $\otimes$ and a neutral element $\one$. Standard examples of monoids are: the set of words $\Sigma^*$ with concatenation and empty word; or the set of matrices with multiplication and the identity matrix.
A semiring is a structure $\SR = (\sr, \add, \mult, \zero, \one)$, where $(\sr, \add, \zero)$ is a commutative monoid, $(\sr, \mult, \one)$ is a monoid, multiplication distributes over addition, and $\zero \mult s = s \mult \zero = \zero$ for each $s \in \sr$.
If the multiplication is commutative, we say that $\SR$ is commutative. In this paper, we always assume that $\SR$ is commutative.
We usually denote $S$ or $M$ by the name of the semiring or monoid $\SR$ or $\MD$.
We are interested mostly in
the tropical semirings: the \emph{min-plus semiring} $(\natpinf, \min, +,+\infty, 0)$ and the \emph{max-plus semiring} $(\natninf, \max, +, -\infty, 0)$. We are also interested in the \emph{semiring of natural numbers with infinity} $(\natinf, +, \cdot, 0, 1)$, where $\infty + n = \infty$ for every $n \in \natinf$ and $\infty \cdot n = \infty$ if $n \neq 0$ and $0$ otherwise.
We denote the tropical semirings by $\trop$ and $\arctic$; and the latter semiring by $\natsemiring$.
Note that $\natsemiring$ is an extension of the standard semiring of natural numbers $\nat$ and all our results for $\natsemiring$ also hold for $\nat$.
We use the extended version of $\nat$ to transfer some results from $\natsemiring$ to $\trop$ and $\arctic$ (see Section~\ref{sec:pump-sum} and Section~\ref{sec:fin_min}). Notice that in $\natsemiring$ we did not put a sign in front of $\infty$. For the semiring structure, this is not relevant. However, some statements in Section~\ref{sec:pump-sum} and Section~\ref{sec:fin_min} will assume that the semiring is given with an order. Then one should think that the results hold both when $\infty = + \infty$ and when $\infty = - \infty$.

Given a finite set $Q$, we denote by $\SR^{Q \times Q}$ ($\SR^Q$) the set of square matrices (vectors resp.) over $\SR$ indexed by $Q$. The algebra induced by $\SR$ over $\SR^{Q \times Q}$ and $\SR^Q$ is defined as usual.

We also consider two finite semirings that will be useful during proofs. We consider the boolean semiring $\bool = (\{0,1\}, \vee, \wedge, 0, 1)$ and the extended boolean semiring $\bool_\infty = (\{0,1, \infty\}, \vee, \wedge, 0, 1)$ such that $\infty \vee n = \infty$ for every $n \in \{0,1, \infty\}$, $\infty \wedge 0 = 0$, and $\infty \wedge n = \infty$ if $n \in \{1, \infty\}$.
These finite semirings will be used as \emph{abstractions} of $\trop$, $\arctic$ and $\natsemiring$.


\subsection{Weighted automata}
Fix a finite alphabet $\AL$ and a commutative semiring $\SR$.
A \emph{weighted automaton} (WA for short) over $\AL$ and $\SR$ is a tuple $\cA = (Q, \AL, \{M_a\}_{a \in \Sigma}, I, F)$ where $Q$ is a finite set of states, $\{M_a\}_{a \in \Sigma}$ is a set of matrices such that $M_a \in \SR^{Q \times Q}$ and $I, F \in \SR^Q$ are the initial and the final vectors, respectively~\cite{Sakarovitch09,DrosteHWA09}.
We say that a state $q$ is initial if $I(q) \neq \zero$ and accepting if $F(q) \neq \zero$.
We say that $(p,a,s,q)$ is a transition, where $M_a(p,q) = s$ and we write $p \:\trans{a/s}\: q$.
Furthermore, we say that a run $\rho$ of $\cA$ over a word $w = a_1 \ldots a_n$ is a sequence of transitions:
$
\rho = q_0 \:\trans{a_1/s_1}\: q_1  \:\trans{a_2/s_2}\: \cdots\:\trans{a_n/s_n}\: q_n,
$
where $s_i \neq \zero$ for all $1 \le i \le n$ and $I(q_0) \neq \zero$.
We refer to $q_i$ as the $i$-th state of the run $\rho$.
The run $\rho$ is accepting if $F(q_n) \neq \zero$, and the weight of an accepting run $\rho$ is defined by
$
|\rho| = I(q_0) \mult (\bigodot_{i=1}^{n} s_i) \mult F(q_n).
$
We define $\Run_\cA(w)$ as the set of all accepting runs of $\cA$ over~$w$.
Finally, the output of $\cA$ over a word $w$ is defined by
$
\asem{\cA}(w)  =  I^t \cdot M_{a_1} \cdot \ldots \cdot M_{a_n} \cdot F  = \bigoplus_{\rho \in \Run_\cA(w)} |\rho|
$
where $I^t$ is the transpose of $I$ and the sum is equal to $\zero$ if $\Run_\cA(w)$ is empty. For a word $w = a_1 \ldots a_n$, by $M_w$ we denote $M_{a_1} \cdot \ldots \cdot M_{a_n}$, so that $\asem{\cA}(w) = I^t \cdot M_w \cdot F$.
Note that $M_w(p,q)$ provides the cost of moving from state $p$ to
state $q$ reading the word $w$. Functions defined by weighted automata are called \emph{recognisable functions}.

In this paper, we study the specification of functions from words to values, namely, from $\Sigma^*$ to $\SR$.
We say that a function $f: \Sigma^* \rightarrow \SR$ is definable by a weighted automaton $\cA$ if $f(w) = \asem{\cA}(w)$ for all $w \in \Sigma^*$.

A weighted automaton $\cA$ is called \emph{unambiguous} (U-WA) if $|\Run_\cA(w)| \leq 1$ for every $w \in \Sigma^*$; and $\cA$ is called \emph{finitely-ambiguous} (FA-WA) if there exists a uniform bound $N$ such that $|\Run_\cA(w)| \leq N$ for every $w \in \Sigma^*$~\cite{Weber94,KlimannLMP04}.
Furthermore, $\cA$ is called \emph{polynomially-ambiguous} (PA-WA) if the function $|\Run_\cA(w)|$ is bounded by a polynomial in the length of~$w$~\cite{KirstenL09}.
We call the classes of functions definable by such automata \emph{unambiguously recognisable}, \emph{finite-ambiguously recognisable} and \emph{polynomial-ambiguously recognisable}.

Note that every unambiguous WA over $\trop$ and $\arctic$ can be defined by a
WA over the semiring
$\natsemiring$ (recall that
$\infty$ is in $\natsemiring$).
Indeed,
let $\cA$ be an unambiguous WA over $\trop$ (or $\arctic$). We define two copies of $\cA$: $\cA_1$ and
$\cA_2$ such that all previously non-$\zero$ transitions between the states inside each copy have weights $1$. The initial vector is inherited from $\cA$ in the copy $\cA_1$ and defined as $0$ for states in $\cA_2$. Conversely, the final vector is inherited from $\cA$ in the copy $\cA_2$ and defined as $0$ for states in $\cA_1$. Finally, for every transition $(p,a,s,q)$ in $\cA$ we add a transition $(p_1,a,s,q_2)$, where $p_1$ is the copy of $p$ in $\cA_1$ and $q_2$ is the copy of $q$ in $\cA_2$. Since $\cA$ is unambiguous it is easy to see that the construction works for words $w$ such that $\cA(w) \neq \infty$. Let $L$ be the language of the remaining words. Then for $w\in L$ the new automaton outputs $0$ instead of $\infty$. Since $L$ is regular it suffices to take a union with another weighted automaton over $\natsemiring$ that outputs $\infty$ for $w \in L$ and $0$ for $w \not \in L$.



Therefore, the class of unambiguously recognisable functions over $\trop$
(and $\arctic$) is included in the class of recognisable functions over
$\natsemiring$.
The inclusions are strict since recognisable functions over $\trop$ (and $\arctic$) are always bounded by a linear function in the size of the word, and it is easy to define the function $f(w) = 2^{|w|}$ over $\natsemiring$.
Below, we give several examples of functions defined by WA over $\natsemiring$ and $\trop$ that will be used in paper. Recall that in the latter semiring $\zero = \infty$ and $\mult = +$. Transitions $p \:\trans{a/s}\: q$, where $s = \zero$, are omitted.

\begin{figure}[]
	\begin{center}
		\scalebox{.75}{
			  	\begin{tikzpicture}[-\string>,\string>=stealth',shorten \string>=1pt,auto,node distance=2.5cm,semithick,initial text={}, every node/.style={scale=0.9}]
	\tikzstyle{every state}=[fill=white,draw=black,text=black]

	\node[state, double, minimum size=5ex] 		(n1) at (-0.5,0) {};
	\node[state, left = 1cm of n1, initial left, minimum size=5ex] 		(n0) {};
	\node (no1) at ($(n1) + (-0.8, -0.8)$) {};
	\draw (no1) edge (n1);

	\draw (n0) edge node {$b \ / \ 0$}  (n1);

	\draw (n0) edge[loop above] node {$\begin{array}{c}
		a \ / \ 0 \\
		b \ / \ 0
		\end{array}$}  (n0);

	\draw (n1) edge[loop above] node {$\begin{array}{c}
		a \ / \ 1
		\end{array}$}  (n1);

	\node at ($(n0) + (-0.3, -0.7)$) {$\mathcal{W}_1$ over $\trop$};

	\node[state, right = 3.5cm of n1, initial below, minimum size=5ex] 		(m0) {};
	\node[state, right = 1cm of m0, double, minimum size=5ex] 		(m1) {};
	\node[state, left = 1cm of m0, double, minimum size=5ex] 		(m2) {};
		\node[state, right = 0.7cm of m1, double, minimum size=5ex,initial below] 		(m3) {};
		\node[below right = 0.2cm and -0.3cm of m3] {$\infty$};

	\draw (m0) edge node {$a \ / \ 1$}  (m1);
	\draw (m0) edge node[above] {$b \ / \ 0$}  (m2);

	\draw (m0) edge[loop above] node {$\begin{array}{c}
		a \ / \ 1 \\
		b \ / \ 1
		\end{array}$}  (m0);

	\draw (m1) edge[loop above] node {$\begin{array}{c}
		a \ / \ 1
		\end{array}$}  (m1);

	\node at ($(m2) + (-0.3, -0.7)$) {$\mathcal{W}_1'$ over $\natsemiring$};

	\node[state, double, initial left, minimum size=5ex,right = 2.5cm of m3]	(q0) {};
	\node[state,double, initial left, right = 1cm of q0, minimum size=5ex] 		(q1) {};

	\node (qo1) at ($(q1) + (-0.8, -0.8)$) {};
	\node (qo2) at ($(q1) + (0.8, -0.8)$) {};

	\draw (q0) edge[loop above] node {$\begin{array}{c}
				a\ / \ 1 \\
				b \ / \ 0
				\end{array}$}  (p0);
	\draw (q1) edge[loop above] node {$\begin{array}{c}
		a\ / \ 0 \\
		b \ / \ 1
		\end{array}$}  (q1);

	\node at ($(q0) + (-0.3, -0.7)$) {$\mathcal{W}_2$ over $\trop$};

	\begin{scope}[yshift=-3cm]
	\node[state, minimum size=5ex] 		(p0) at (1.5,0) {};
	\node[state, left = 1cm of p0, initial left, minimum size=5ex] 		(p5) {};
	\node[state,double, right = 1cm of p0, minimum size=5ex] 		(p1) {};
	\node[state,double, right =1cm of p1, minimum size=5ex] 		(p2) {};

	\node (po1) at ($(p0) + (-0.8, -0.8)$) {};
	\node (po2) at ($(p1) + (0.8, -0.8)$) {};
	\draw (po1) edge (p0);

	\draw (p1) edge[loop above] node {$\begin{array}{c}
		b \ / \ 1
		\end{array}$}  (p1);
	\draw (p2) edge[loop above] node {$a,b \ / \ 0$}  (p2);
		\draw (p5) edge[loop above] node {$a,b \ / \ 0$}  (p5);

	\draw (p0) edge node {$b \ / \ 1$}  (p1);
	\draw (p1) edge node {$a \ / \ 0$}  (p2);
	\draw (p5) edge node {$a \ / \ 0$}  (p0);

	\node at ($(p5) + (-0.3, -0.7)$) {$\mathcal{W}_4$  over $\trop$};

		\node[state, double, initial left, minimum size=5ex] 		(q0) at (-4,0) {};
	\node[state,double, right = 1cm of q0, minimum size=5ex] 		(q1) {};

	\node (qo1) at ($(q1) + (-0.8, -0.8)$) {};
	\node (qo2) at ($(q1) + (0.8, -0.8)$) {};

	\draw (q0) edge[loop above] node {$\begin{array}{c}
				a\ / \ 1 \\
				b \ / \ 0
				\end{array}$}  (p0);
	\draw (q1) edge[loop above] node {$\begin{array}{c}
		a\ / \ 0 \\
		b \ / \ 1
		\end{array}$}  (q1);
	\draw (q0) edge node {$\begin{array}{c}
				a\ / \ 0 \\
				b \ / \ 0
				\end{array}$} (q1);


	\node at ($(q0) + (-0.3, -0.7)$) {$\mathcal{W}_3$ over $\trop$};

		\node[state, double, minimum size=5ex] 		(r0) at (7,0) {};
	\node[state,double, right of=r0, minimum size=5ex] 		(r1) {};
	\node[state,double, right of=r1, minimum size=5ex] 		(r2) {};

	\node (ro1) at ($(r1) + (0.0, -1)$) {};
	\node (ro2) at ($(r2) + (0.0, -1)$) {};
	\node (ro3) at ($(r0) + (-1, 0)$) {};
	\draw (ro1) edge (r1);

	\draw (r0) edge[loop above] node {$\begin{array}{c}
				a\ / \ 1 \\
				b \ / \ 0
				\end{array}$}  (r0);
	\draw (r1) edge[loop above] node {$\begin{array}{c}
		\# \ / \ 0
		\end{array}$}  (r1);
	\draw (r2) edge[loop above] node {$\begin{array}{c}
		a\ / \ 0 \\
		b \ / \ 1
		\end{array}$}  (r2);

	\draw (r0) edge[bend left] node {$\# \ / \ 0$}  (r1);
		\draw (r2) edge[bend right] node[above] {$\# \ / \ 0$}  (r1);
	\draw (r1) edge[bend right] node[below] {$\begin{array}{c}
		a\ / \ 0 \\
		b \ / \ 1
		\end{array}$}  (r2);
		\draw (r1) edge[bend left] node[below] {$\begin{array}{c}
		a\ / \ 1 \\
		b \ / \ 0
		\end{array}$}  (r0);

	\node at ($(r0) + (-0.5, -0.7)$) {$\mathcal{W}_5$ over $\trop$};

	\end{scope}

	\end{tikzpicture}

		}
		\caption{Examples of weighted automata. For WA over $\trop$ the initial and accepting states are labelled by 0 in the corresponding vector, and $\infty$ otherwise. Similarly, for WA over $\natsemiring$ the initial and accepting states are labelled by 1 in the corresponding vector, and $0$ otherwise; except for the initial state labeled $\infty$.}\label{fig:WA}
	\end{center}
\end{figure}
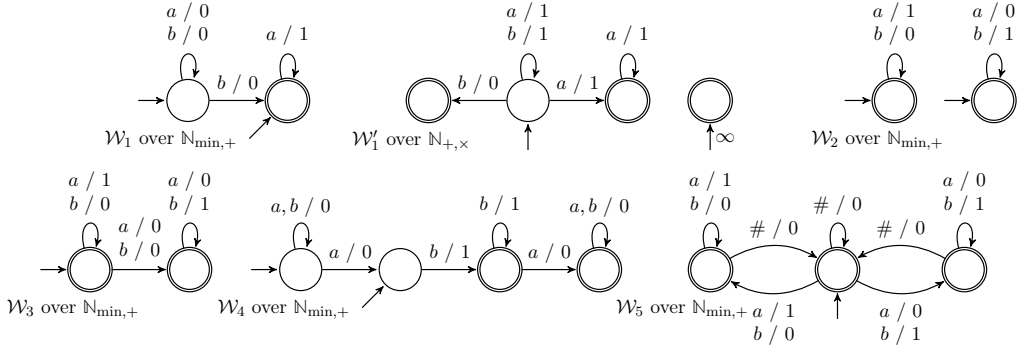

\begin{exa} \label{ex:in_unambiguous}
	Let $\Sigma = \{a,b\}$.
	Consider the function $f_1$ that for given word $w \in
        \Sigma^*$ outputs the length of the biggest suffix of $a$'s
        (and $\infty$ for the empty word). This is defined by the WA
        $\mathcal{W}_1$ over $\trop$ in Figure~\ref{fig:WA}. One can
        easily check that $\mathcal{W}_1$ is unambiguous, hence $f_1$
        is an unambiguously recognisable function over $\trop$. In
        Figure~\ref{fig:WA}, the WA $\mathcal{W}_1'$ over $\natsemiring$ also defines $f_1$.
\end{exa}

\begin{exa} \label{ex:min-letter}
	Let $\Sigma = \{a,b\}$.
	Consider the function $f_2$ that for a given word $w \in
        \Sigma^*$ outputs $\min\{|w|_a, |w|_b\}$, namely, counts the
        number of each letter and returns the minimum. This is defined
        by the WA $\mathcal{W}_2$ in Figure~\ref{fig:WA}. The WA
        $\mathcal{W}_2$ is finitely-ambiguous, hence $f_2$ is a finite-ambiguously recognisable function.
\end{exa}

\begin{exa} \label{ex:min-a_to_b}
	Let $\Sigma = \{a,b\}$.
	Consider the function $f_3$ that for a given word $w = a_1 \ldots a_n \in \Sigma^*$ outputs $\min_{0\le i \le n} \{|a_1 \ldots a_i|_a + |a_{i+1} \ldots a_n|_b\}$.
	This is defined by the WA $\mathcal{W}_3$ in Figure~\ref{fig:WA}. This
        WA is polynomially-ambiguous, hence $f_3$ is a
        polynomial-ambiguously recognisable function.
\end{exa}

\begin{exa} \label{ex:min-b-substrings}
	Let $\Sigma = \{a,b\}$.
	Consider the function $f_4$ that for a given word $w \in \Sigma^*$ computes the shortest subword of $b$'s (if there is none it outputs $+\infty)$.
	This is defined by $\mathcal{W}_4$ in Figure~\ref{fig:WA}. The WA is polynomially-ambiguous, hence $f_4$ belongs to polynomially-ambiguous functions.
\end{exa}

\begin{exa}	\label{ex:no-poly-ambiguous}
	Let $\Sigma = \{a,b,\#\}$.
	Consider the function $f_5$ such that, for every $w \in
        \Sigma^*$ of the form $w_0 \# w_1 \# \ldots \# w_n$ with $w_i
        \in \{a,b\}^*$, it computes $\min\{|w_i|_a, |w_i|_b\}$ for each subword $w_i$ and then it sums these values over all subwords $w_i$, that is, $f_5(w) = \sum_{i=0}^n \min\{|w_i|_a, |w_i|_b\}$.
	This function is defined by the WA $\mathcal{W}_5$ in Figure~\ref{fig:WA}. Given that this WA has an exponential number of runs, the function $f_5$ is a recognisable function, but not necessarily a polynomial-ambiguously recognisable function.
\end{exa}

We will also discuss variants of the functions $f_1$, $f_2$, $f_3$, $f_4$ and $f_5$ in the max-plus semiring in Section~\ref{sec:max}.

We assume that our weighted automata are always trim, namely, all
their states are reachable from some initial state (accessible in
short) and they can reach some final state (co-accessible in short).
Verifying if a state is accessible or co-accessible is reduced to a reachability test in the transition graph~\cite{papadimitriou1993computational} and this can be done in \NLogSpace.
Thus, we can assume without loss of generality that all our automata are trimmed.

\subsection{Finite monoids and idempotents} \label{subsec:monoids}
We say that a monoid is finite if the set of its elements is finite. Let $\MD = (M,\otimes,\one)$ be a finite monoid. We say that $\iota \in \MD$ is an idempotent if $\iota \otimes \iota = \iota$.
The following lemma is a standard result for finite monoids and
idempotents (see e.g.~\cite{pin2010mathematical}).
\begin{lem}\label{lemma:ramsey}
	Let $\MD$ be a finite monoid.
	There exists some $N > 0$ such that every sequence
	$m_1, \ldots, m_n$,
	with $m_i \in \MD$ and $n \geq N$, can be factorised as:
	\[
	(m_1 \otimes \ldots \otimes m_i) \otimes (m_{i+1} \otimes
        \ldots \otimes m_j) \otimes (m_{j+1} \ldots \otimes m_{n})\, ,
	\]
	where $i < j \leq n$ and $(m_{i+1} \otimes \ldots \otimes m_j)$ is an idempotent.
\end{lem}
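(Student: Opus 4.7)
The plan is to apply Ramsey's theorem for pairs. First I would color every pair $\{i,j\}$ with $0 \le i < j \le n$ by the element $m_{i+1} \otimes \cdots \otimes m_j \in \MD$; this uses at most $|\MD|$ colors. Take $N$ to be the Ramsey number guaranteeing that any edge-coloring of the complete graph on $N+1$ vertices with $|\MD|$ colors contains a monochromatic triangle. Then for $n \geq N$, Ramsey's theorem yields indices $0 \le i < j < k \le n$ such that the three pairs $\{i,j\}$, $\{j,k\}$, $\{i,k\}$ all receive the same color $e \in \MD$.

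Next I would exploit the key algebraic identity that comes for free from associativity:
\[
m_{i+1} \otimes \cdots \otimes m_k \;=\; (m_{i+1} \otimes \cdots \otimes m_j) \otimes (m_{j+1} \otimes \cdots \otimes m_k).
\]
Since the three pairs share color $e$, the left-hand side equals $e$ while the right-hand side equals $e \otimes e$. Hence $e \otimes e = e$, so the infix product $m_{i+1} \otimes \cdots \otimes m_j = e$ is an idempotent of $\MD$, as required.

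To conclude, the factorization
\[
(m_1 \otimes \cdots \otimes m_i) \otimes (m_{i+1} \otimes \cdots \otimes m_j) \otimes (m_{j+1} \otimes \cdots \otimes m_n)
\]
witnesses the statement of the lemma, where empty products are interpreted as $\one$ (which covers the boundary case $i = 0$; the case $j = n$ is harmless).

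The only real ingredient beyond pigeonholing is Ramsey's theorem itself, so the main point to get right is the choice of $N$ and the verification that the monochromatic triangle immediately forces idempotency via associativity. There is no deeper obstacle: the result is genuinely an elementary consequence of finiteness and Ramsey's theorem for pairs with $|\MD|$ colors.
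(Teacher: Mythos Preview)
Your argument is correct: coloring pairs by the infix product and invoking Ramsey's theorem for a monochromatic triangle is exactly the standard proof, and your associativity observation $e = e \otimes e$ is the right way to extract idempotency. Note that the paper does not actually prove this lemma; it simply states it as a standard fact and cites~\cite{pin2010mathematical}, where the same Ramsey-based argument appears, so your proposal is in line with the intended reference.
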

We will mainly use two finite monoids of matrices, $\bool^{Q \times Q}$ and $\bool_\infty^{Q \times Q}$.
We define abstractions, i.e.\ homomorphisms of $\trop^{Q
  \times Q}$ to $\bool^{Q \times Q}$, $\arctic^{Q \times Q}$ to
$\bool^{Q \times Q}$, and $\natsemiring^{Q \times Q}$ to
$\bool_\infty^{Q \times Q}$. They are obtained from the homomorphisms
defined on elements of the matrices, namely $h_1 : \trop \to \bool$,
$h_2 : \arctic \to \bool$, and $h_3 : \natsemiring \to \bool_\infty$,
by setting
$h_1(m)= 0$ iff $m = + \infty$; $h_2(m)= 0$ iff $m = -\infty$; $h_3(0)=
0$, $h_3(\infty) = \infty$, and $h_3(m) = 1$ if $m \not=0,\infty$.
For matrices $M_1 \in \trop^{Q \times Q}$, $M_2 \in \arctic^{Q \times Q}$, or $M_3 \in \natsemiring^{Q \times Q}$ we denote by $\bar{M_1} = h_1(M_1)$, $\bar{M_2} = h_2(M_2)$, or $\bar{M_3} = h_3(M_3)$ their abstractions in $\bool^{Q \times Q}$ or $\bool_\infty^{Q \times Q}$.
By abuse of language we say that a matrix $M$ from $ \trop^{Q \times
  Q}$, $\arctic^{Q \times Q}$ or $\natsemiring^{Q \times Q}$ is
\emph{idempotent}, if its abstraction $\bar{M}$ is idempotent.

\section{Recognisable functions over $\natsemiring$}
\label{sec:pump-sum}

In this section we consider recognisable functions over $\natsemiring$. As
a corollary of the pumping lemma we show that FA-WA are strictly more
expressive than U-WA over $\trop$ and $\arctic$
(Example~\ref{ex:unambiguous} and beginning of Section~\ref{sec:max}).
Moreover, this shows that there are finite-ambiguously recognisable functions over $\trop$ and $\arctic$ that cannot be defined by any recognisable function over $\natsemiring$.

We introduce some notation to simplify the presentation.
Given $u \cdot v \cdot w = \hat{u} \cdot \hat{v} \cdot \hat{w}$, where $u,v,w, \hat{u}, \hat{v}, \hat{w} \in \Sigma^*$, we say that $\hat{u} \cdot \underline{\hat{v}} \cdot \hat{w}$ is a \emph{refinement} of $u \cdot \underline{v} \cdot w$ if there exist $u', w'$ such that $u \cdot u' = \hat{u}$, $w' \cdot w = \hat{w}$, $u' \cdot \hat{v} \cdot w' = v$, and $\hat{v} \neq \epsilon$. We underline the infixes $v$ and $\hat{v}$ to emphasise the refined part.

\begin{thm}[Pumping Lemma for recognisable functions over $\natsemiring$]
\label{theorem:pumping_refined}
Let $f : \Sigma^* \to \natinf$ be a recognisable function over $\natsemiring$.
There exists $N$ such that for all words of the form $u \cdot v \cdot
w \in \Sigma^*$ with $|v| \geq N$, $v \neq \epsilon$, there exists a refinement $\hat{u} \cdot \underline{\hat{v}} \cdot \hat{w}$ of $u \cdot \underline{v} \cdot w$ such that one of the following two conditions holds:
\smallskip
\begin{itemize} \itemsep2mm
 \item $f(\hat{u} \cdot \underline{\hat{v}}^i \cdot \hat{w}) \ = \ f(\hat{u} \cdot \underline{\hat{v}}^{i+1} \cdot \hat{w})$ for every $i \geq N$.
 \item $f(\hat{u} \cdot \underline{\hat{v}}^i \cdot \hat{w}) \ < \ f(\hat{u} \cdot \underline{\hat{v}}^{i+1} \cdot \hat{w})$ for every $i \geq N$.
\end{itemize}
\end{thm}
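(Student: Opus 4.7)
The plan is to apply the Ramsey-style Lemma~\ref{lemma:ramsey} to the finite monoid $\bool_\infty^{Q \times Q}$ of matrix abstractions in order to extract the factorisation, and then to exploit idempotency of the resulting middle matrix to analyse the pumped values. Fix a WA $\cA = (Q, \Sigma, \{M_a\}_{a \in \Sigma}, I, F)$ over $\natsemiring$ computing $f$, and let $N$ be a constant exceeding both the Ramsey bound for $\bool_\infty^{Q \times Q}$ and $|Q|$. Given any factorisation $u \cdot v \cdot w$ with $|v| \geq N$, writing $v = v_1 \cdots v_n$ as a sequence of letters, apply Lemma~\ref{lemma:ramsey} to the sequence $\bar{M_{v_1}}, \ldots, \bar{M_{v_n}}$ to obtain a splitting $v = u' \cdot \hat v \cdot w'$ with $\hat v \neq \epsilon$ and $\bar{M_{\hat v}}$ idempotent. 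Setting $\hat u := u \cdot u'$ and $\hat w := w' \cdot w$ yields the required refinement $\hat u \cdot \underline{\hat v} \cdot \hat w$ of $u \cdot \underline v \cdot w$.

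Write $P := I^t \cdot M_{\hat u} \in \natinf^Q$ and $S := M_{\hat w} \cdot F \in \natinf^Q$, so that $f(\hat u \cdot \hat v^i \cdot \hat w) = \sum_{p, q \in Q} P(p) \cdot M_{\hat v}^i(p, q) \cdot S(q)$. By idempotency, $\bar{M_{\hat v}}^i = \bar{M_{\hat v}}$ for every $i \geq 1$, so the partition of matrix positions into zero, finite-positive, and $\infty$ entries is fixed throughout the pumping. In particular, $f(\hat u \cdot \hat v^i \cdot \hat w) = \infty$ either for every $i \geq 1$ or for no such $i$; the first alternative is exactly the constant clause. In the finite alternative, the sum is supported on a fixed set $J$ of pairs $(p, q)$ with $P(p), S(q) \in \nat \setminus \{0\}$ and $\bar{M_{\hat v}}(p, q) = 1$, so every contributing factor $M_{\hat v}^i(p, q)$ is a positive integer.

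The core of the argument is then a per-entry dichotomy: for each $(p, q) \in J$, the sequence $(M_{\hat v}^i(p, q))_{i \geq N}$ is either constant or strictly increasing. Read $M_{\hat v}^i(p, q)$ as a sum of weights of length-$i$ paths in the directed graph $G$ on $Q$ whose edges are the positive entries of $M_{\hat v}$. Idempotency forces every vertex lying on a cycle in $G$ to carry a self-loop (otherwise one would have $\bar{M_{\hat v}}^k(r, r) = 1 \neq 0 = \bar{M_{\hat v}}(r, r)$ for some $k$), and by pigeonhole every length-$i$ path with $i \geq |Q|$ visits such a self-looped vertex. Canonical insertion of the self-loop at the first such vertex on each path yields an injection from length-$i$ paths to length-$(i+1)$ paths whose weight is multiplied by $M_{\hat v}(r, r) \geq 1$, proving the weak monotonicity $M_{\hat v}^{i+1}(p, q) \geq M_{\hat v}^i(p, q)$. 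A finer analysis of the SCC structure of $G$ (which, under idempotency, is ``complete'' within each SCC and between comparable SCCs) sharpens this into the required dichotomy: the sequence is constant exactly when the only SCC involved in paths from $p$ to $q$ is a singleton with a unit self-loop; in every other case either a self-loop of weight at least two contributes to some path, or a new length-$(i+1)$ path appears outside the image of the injection, forcing strict increase. The coefficients $P(p) \cdot S(q)$ are positive integers, so this entry-wise dichotomy transfers to the full sum. The hardest step is precisely this combinatorial dichotomy on path-weight sums: weak monotonicity falls out cleanly from the self-loop insertion, but separating the constant case from eventual strict growth requires a careful analysis of both transition magnitudes and the SCC structure of $G$, with some extra bookkeeping to handle the $\infty$ entries in $\bool_\infty$ gracefully.
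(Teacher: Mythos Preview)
Your overall plan is sound and the setup (Ramsey on $\bool_\infty^{Q\times Q}$, idempotent middle factor, fixed $\bool_\infty$-type of $M_{\hat v}^i$) matches the paper exactly. The divergence is in how the dichotomy is established.

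The paper first normalises the automaton so that $\infty$ occurs only in the final vector $F$, and then proves an algebraic lemma: for any idempotent $D$ and vectors $x,y$ (with $D,x$ finite), either $x^T D^i y = x^T D^{i+1} y$ for all $i\ge |Q|$, or strict inequality holds for all such $i$. The argument is by induction on a partial order $\preceq_D$ on the \emph{non-stable} states (those with $D(p,p)=0$). For a stable state $p$ one writes $D e_p = e_p + z$ and observes that $x^T D^{i+1} e_p = x^T D^i e_p + x^T D^i z$; by idempotency and a one-line equivalence lemma, $x^T D^i z>0$ iff $x^T D z>0$, which immediately gives the dichotomy at stable states. Non-stable states are handled by expanding $D e_p$ in terms of strictly $\preceq_D$-smaller (or stable) coordinates and invoking the induction hypothesis.

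Your route is combinatorial rather than algebraic: you aim for a per-entry dichotomy on $M_{\hat v}^i(p,q)$ via path counting, using a self-loop–insertion injection for weak monotonicity and then an SCC analysis to separate the constant from the strictly increasing case. This can be made to work, and your characterisation of the constant case (a single unit-weight self-looped vertex between $p$ and $q$) is correct once one checks that idempotency forces all SCCs to be cliques with self-loops and that two distinct self-looped vertices on $p$--$q$ paths already create linearly many distinct length-$i$ paths. However, this is precisely the step you leave as ``a finer analysis'': it is the heart of the proof and needs to be written out, not asserted. You should also note that your handling of $\infty$ (arguing that in the finite alternative the support $J$ involves only finite-positive factors, so any self-loop on a $J$-relevant vertex has finite weight) replaces the paper's up-front normalisation; it works, but deserves a sentence of justification rather than being folded into ``extra bookkeeping''.

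In short: same skeleton, different core argument. The paper's inductive proof is shorter and avoids case analysis; your path-based approach is more pictorial but requires you to actually carry out the SCC case split you only name.
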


Before going into the details of the proof let us show how to use the lemma.

\begin{exa}\label{ex:unambiguous}
We show that $f_2$ from Example~\ref{ex:min-letter} is not definable by any WA over $\natsemiring$.
Indeed, suppose it is definable and fix $N$ from Theorem~\ref{theorem:pumping_refined}. Consider the word $w = a^{(N+1)^2}\underline{b^{N}}$ and notice that $f_2(w) = N$. By refining $w$ we get $\hat{u} \cdot \underline{\hat{v}} \cdot \hat{w} = a^{(N+1)^2} b^n \underline{b^m} b^l$ for some $n, m, l$ such that $1 \leq m \leq N$ and $n + m + l = N$.
Since $n + m\cdot N +l < n + m\cdot(N+1) +l < (N+1)^2$ it must be the case that $f_2(\hat{u} \cdot \underline{\hat{v}}^i \cdot \hat{w}) < f_2(\hat{u} \cdot \underline{\hat{v}}^{i+1} \cdot \hat{w})$ for all $i \ge N$.
However, $f_2(\hat{u} \cdot \underline{\hat{v}}^{i} \cdot \hat{w}) = (N+1)^2$ for $i$ sufficiently large, which is a contradiction.
\end{exa}


\begin{exa}\label{ex:unambiguous_ok}
On the other hand, the function $f_1$ from Example~\ref{ex:in_unambiguous} satisfies Theorem~\ref{theorem:pumping_refined}. Consider a word $u\cdot \underline{v} \cdot w \in \Sigma^*$ and its refinement $\hat{u} \cdot \underline{\hat{v}} \cdot \hat{w}$. If $\hat{w}$ or $\hat{v}$ contain $b$ then $f(\hat{u} \cdot \underline{\hat{v}}^i \cdot \hat{w}) \ = \ f(\hat{u} \cdot \underline{\hat{v}}^{i+1} \cdot \hat{w})$ because the suffix of $a$'s remains the same. Otherwise, $f(\hat{u} \cdot \underline{\hat{v}}^i \cdot \hat{w}) \ < \ f(\hat{u} \cdot \underline{\hat{v}}^{i+1} \cdot \hat{w})$ since the suffix of $a$'s increases when pumping. Moreover, it is straightforward to generalise this argument and prove Theorem~\ref{theorem:pumping_refined} for all U-WA over $\trop$.
\end{exa}

To prove Theorem~\ref{theorem:pumping_refined} we use the following definitions.
For a matrix $M \in \natsemiring^{Q \times Q}$ recall that $\bar{M}$ is its homomorphic image in $\bool_\infty^{Q \times Q}$ (see Section~\ref{subsec:monoids}). We write that $M$ and $N$ in $\natsemiring^{Q \times Q}$ are equivalent, denoted $M \equiv_{\bool_\infty} N$, iff $\bar{M} = \bar{N}$.
We also extend the homomorphic image and equivalence relation from matrices to vectors.
We say that $D \in \natsemiring^{Q\times Q}$ is an \emph{idempotent} if $\bar{D}$ is an idempotent in the finite monoid $\bool_\infty^{Q \times Q}$.

\begin{lem}\label{lemma:equiv-pos}
	If $M \equiv_{\bool_\infty} N$, then $x^T \cdot M \cdot y > 0$ if and only if $x^T \cdot N \cdot y > 0$ for every $x, y \in \natsemiring^Q$.
\end{lem}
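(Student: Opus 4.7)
The plan is to unpack the definition of $\equiv_{\bool_\infty}$ and then expand the bilinear form entry by entry. Recall from Section~\ref{subsec:monoids} that $h_3 : \natsemiring \to \bool_\infty$ satisfies $h_3(m) = 0$ iff $m = 0$. Hence $M \equiv_{\bool_\infty} N$ implies in particular that, for every pair $(p,q) \in Q \times Q$, one has $M(p,q) = 0$ if and only if $N(p,q) = 0$ (the fact that they also agree on the distinction between $1$ and $\infty$ is not needed for this lemma).

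Next, I would expand
\[
x^T \cdot M \cdot y \ = \ \bigoplus_{p, q \in Q} x(p) \mult M(p,q) \mult y(q),
\]
computed inside $\natsemiring$. A sum of elements of $\natinf$ equals $0$ iff each summand is $0$, so $x^T \cdot M \cdot y > 0$ holds iff some summand $x(p) \mult M(p,q) \mult y(q)$ is nonzero. Because multiplication in $\natsemiring$ is the usual product on $\nat$ together with the rules $\infty \mult n = \infty$ for $n \neq 0$ and $\infty \mult 0 = 0$, a product in $\natinf$ is nonzero iff every factor is nonzero. Therefore
\[
x^T \cdot M \cdot y > 0 \ \iff\ \exists p, q \in Q : \ x(p) \neq 0, \ M(p,q) \neq 0, \ y(q) \neq 0.
\]

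Finally, by the first paragraph the conditions $M(p,q) \neq 0$ and $N(p,q) \neq 0$ are interchangeable, so the displayed characterisation is preserved when $M$ is replaced by $N$. This yields the equivalence $x^T \cdot M \cdot y > 0 \iff x^T \cdot N \cdot y > 0$ and concludes the proof. There is no real obstacle here: the only point that needs care is the convention $\infty \mult 0 = 0$ in $\natsemiring$, which is exactly what makes ``nonzero product'' coincide with ``all factors nonzero'' and thus allows the zero/nonzero pattern captured by $h_3$ to control positivity of the bilinear form.
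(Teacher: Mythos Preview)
Your proof is correct and follows essentially the same approach as the paper: expand the bilinear form as $\sum_{p,q} x(p)\cdot M(p,q)\cdot y(q)$, observe that positivity is equivalent to the existence of a nonzero summand, and use that $M \equiv_{\bool_\infty} N$ preserves the zero/nonzero pattern of entries. You are slightly more explicit about the role of the convention $\infty \mult 0 = 0$, but the argument is the same.
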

\begin{proof}
	Suppose that $x^T \cdot M \cdot y > 0$. By definition $x^T \cdot M \cdot y = \sum_{p,q} x(p) \cdot M(p,q) \cdot y(q)$.
	Then there exist $p, q \in Q$ such that $x(p) \cdot M(p,q) \cdot y(q) > 0$ and, in particular, $M(p,q) > 0$.
	Given that $M \equiv_{\bool_\infty} N$ we conclude $N(p,q) > 0$ and $x(p) \cdot N(p,q) \cdot y(q) > 0$, which proves $x^T \cdot N \cdot y > 0$.
\end{proof}

\begin{proof}[Proof of Theorem~\ref{theorem:pumping_refined}]
	Let $\cA = (Q, \AL, \{M_a\}_{a \in \Sigma}, I, F)$ be a WA over $\natsemiring$ such that $f = \sem{\cA}$.
	Without loss of generality, we assume that $I(q) \neq \infty$ and $M_a(p,q) \neq \infty$ for every $p,q \in Q$ and $a \in \Sigma$, namely, $\infty$ can only appear in the final vector $F$.
	Indeed, if $\infty$ is used in $I$ or some $M_a$, we can construct two weighted automata $\cA',\cA^\infty$ such that $\cA'$ is the same as $\cA$ but each $\infty$-initial state or each $\infty$-transition is replaced with $0$, and $\cA^\infty$ outputs $\infty$ if there exists some run in $\cA$ that outputs $\infty$ and 0 otherwise.
	Note that $\cA'$ has no $\infty$-transition or $\infty$-initial state and $\cA^\infty$ can be constructed in such a way that only the final vector contains $\infty$-values. The disjoint union of $\cA'$ and $\cA^\infty$ is equivalent to $\cA$.

	Let $N = \max\{|Q|, K\}$  where $K$ is the constant from Lemma~\ref{lemma:ramsey} for the finite monoid $\bool_\infty^{Q \times Q}$.
	For every word $u \cdot v \cdot w \in \Sigma^*$ such that $v = a_1 \ldots a_n$ with $n \geq N$, consider the output $I^T \cdot M_u \cdot M_v \cdot M_w \cdot F$ of $\cA$ over $u \cdot v \cdot w$.
	By Lemma~\ref{lemma:ramsey}, there exists a factorisation of the form:
	\[
	M_v = (M_{a_1} \cdot \ldots \cdot M_{a_i}) \cdot (M_{a_{i+1}} \cdot \cdots \cdot M_{a_{j}}) \cdot (M_{a_{j+1}} \cdot \ldots \cdot M_{a_{n}})
	\]
	for some $i < j$ where $M_{a_{i+1} \dots a_j}$ is an
        idempotent (i.e., $\bar{M}_{a_{i+1}\dots a_j}$ is an idempotent).
	We define the refinement $\hat{u} \cdot \underline{\hat{v}} \cdot \hat{w}$ of $u \cdot \underline{v} \cdot w$ such that $\hat{u} = u \cdot (a_1 \ldots a_i)$, $\hat{v} = a_{i+1} \ldots a_{j}$, and $\hat{w} = (a_{j+1} \ldots a_n) \cdot w$.
	Furthermore, define $x^T = I^T \cdot M_{u a_1 \dots a_i}$, $D =
        M_{a_{i+1} \dots a_{j}}$, and $y = M_{a_{j+1}\dots a_{n}w}\cdot F$.
	Note that $f(\hat{u} \cdot \hat{v}^i \cdot \hat{w}) = x^T \cdot D^i \cdot y$ for every $i \geq 0$ and $D$ is an idempotent (i.e., $\bar{D}$ is an idempotent).
	It remains to show the following lemma.

	\begin{lem}\label{lemma:nat-idempotent}
		For every idempotent $D \in \natsemiring^{Q \times Q}$ and $x, y\in \natsemiring^{Q}$ where $D$ and $x$ do not contain $\infty$-values, one of the conditions holds:
		\begin{align}
		& x^T \cdot D^i \cdot y \; = \; x^T \cdot D^{i+1} \cdot y  \;\;\; \text{ for every } i \geq |Q|, \;\;\; 	\text{ or } \label{eq:plus-lemma-cond1} \\
		& x^T \cdot D^i \cdot y \; < \; x^T \cdot D^{i+1} \cdot y  \;\;\; \text{ for every } i \geq |Q|. \label{eq:plus-lemma-cond2}
		\end{align}
	\end{lem}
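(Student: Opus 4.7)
The plan is to reduce the problem to an entry-wise analysis of $D^i$ via a path-injection argument combined with the structural description of $D$ forced by the idempotence of $\bar D$.

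First I dispose of the $\infty$ values. Since $\bar D^i = \bar D$ for every $i \geq 1$, the positions at which $\infty$ appears in $D^i y$ do not depend on $i$; so if $x^T D^i y = \infty$ for some $i \geq 1$ it equals $\infty$ for all such $i$, and condition \eqref{eq:plus-lemma-cond1} holds trivially. From now on I assume $g(i) := x^T D^i y$ is finite for all $i \geq 1$. Let $U := \{v \in Q : D(v, v) > 0\}$. A pigeonhole argument combined with the idempotence of $\bar D$ shows that every path of length $i \geq |Q|$ in the support of $D$ visits some vertex of $U$ (any repeated vertex gives a cycle, which the idempotence promotes to a self-loop at one of its vertices). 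The map $\pi \mapsto \pi'$ that inserts one extra self-loop at the first $U$-visit of $\pi$ is an injection from paths of length $i$ to paths of length $i+1$ that multiplies the weight by $D(v_\pi, v_\pi) \geq 1$; summing over paths from $\operatorname{supp}(x)$ to $\operatorname{supp}(y)$ yields $g(i+1) \geq g(i)$ for $i \geq |Q|$.

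Next I argue that for each $(p, q)$ in the support of $D$ the sequence $D^i(p, q)$ is, for $i \geq |Q|$, either constant or strictly increasing. The idempotence of $\bar D$ forces the strongly connected components of the support to be either transient singletons (no self-loop) or equivalence classes $[u]$ of $U$ on which $R$ is the complete relation; so the Perron eigenvalue $\rho_{[u]} \geq |[u]| \geq 1$ of $D$ restricted to such a class equals $1$ only for a weight-$1$ self-looped singleton. Let $\rho^*(p, q)$ denote the maximum Perron eigenvalue over SCCs lying on some $p \to q$ path in the support. If $\rho^*(p, q) \geq 2$, the witnessing SCC supplies either a non-self-loop move at the first $U$-visit (making the injection non-surjective, because the path $p \to \cdots \to v \to w \to \cdots \to q$ of length $i+1$ is not in its image) or a heavy self-loop (making the injection strictly weight-increasing on some positive-weight path), so $D^{i+1}(p, q) > D^i(p, q)$ for every $i \geq |Q|$. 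If $\rho^*(p, q) = 1$, every relevant $U$-SCC is a weight-$1$ singleton and the non-self-loop ``skeletons'' of paths from $p$ to $q$ have pairwise distinct vertices (a repetition would create a non-singleton equivalence class, contradicting $\rho^* = 1$). Each skeleton $\kappa$ has then length $k_\kappa \leq |Q| - 1$ and carries $m_\kappa$ loopable vertices, and a stars-and-bars count gives, for $i \geq |Q|$,
\[
D^i(p, q) \ = \ \sum_\kappa w(\kappa) \binom{i - k_\kappa + m_\kappa - 1}{m_\kappa - 1},
\]
a polynomial in $i$ that is constant if every $m_\kappa \leq 1$ and strictly increasing otherwise.

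Since $g(i) = \sum_{p, q} x(p) D^i(p, q) y(q)$ is a non-negative combination of such entries, it is constant if every contributing entry is constant and strictly increasing if at least one is, giving the required dichotomy for all $i \geq |Q|$. The step I expect to be the most delicate is the $\rho^*(p, q) = 1$ case, where pinning the threshold at exactly $|Q|$ uses the idempotence of $\bar D$ a second time to control the length of skeletons; the $\rho^*(p, q) \geq 2$ case is comparatively easy because either the non-self-loop move or the heavy self-loop forces the strict inequality uniformly via the injection.
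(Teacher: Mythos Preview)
Your argument is correct and takes a genuinely different route from the paper. The paper never looks at individual entries $D^i(p,q)$ or at paths; it works column-wise on $x^T D^i e_p$ via a short algebraic induction. For stable $p$ (your $p \in U$) one writes $De_p = e_p + z$, so $x^T D^{i+1}e_p = x^T D^i e_p + x^T D^i z$, and the dichotomy reduces to whether $x^T Dz > 0$ (equivalently $x^T D^i z > 0$ for all $i$, by idempotence and Lemma~\ref{lemma:equiv-pos}). For non-stable $p$, the relation $q \preceq_D p \Leftrightarrow D(q,p) > 0$ restricted to non-stable states is a partial order (antisymmetry uses idempotence), and one inducts: $De_p = \sum_j c_j e_{q_j}$ with each $q_j$ either stable or strictly below $p$, so $x^T D^{i+1}e_p$ is a positive combination of already-controlled terms. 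This is shorter and never mentions SCCs or spectral radii; your approach, in exchange, yields the explicit stars-and-bars formula and makes the threshold $|Q|$ visibly sharp.

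One place needs tightening, and it is precisely the case you call ``comparatively easy''. In the $\rho^*(p,q) \geq 2$ branch you assert that the witnessing SCC supplies a non-self-loop move or a heavy loop ``at the first $U$-visit'', but the first $U$-visit of a path need not lie in that SCC --- for instance when $p$ itself is a weight-$1$ $U$-vertex. The repair is routine but requires a small case split: for $|C| \geq 2$ with $v,w \in C$, the explicit alternating path $p \to v \to w \to v \to \cdots \to q$ (or $p \to w \to p \to \cdots$ when $p \in C$) always has its first $U$-vertex followed by a \emph{different} vertex, hence lies outside the image of your injection; for a heavy singleton $\{v\}$, either the first $U$-visit of $p \to v \to v \to \cdots \to q$ is $v$ (so the injection strictly increases its weight) or it is $p \neq v$ (so the path lies outside the image). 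Alternatively, in the heavy-singleton sub-case your skeleton decomposition still applies verbatim, since all SCCs are singletons there.
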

We start showing that Lemma~\ref{lemma:nat-idempotent} holds when $y = e_p$ for some $p \in Q$, where $e_p(q) = 1$ if $q = p$ and $0$ otherwise. Note that $z = \sum_{p \in Q} z(p)\cdot e_p$ for every vector $z$.

	We say that $p$ is \emph{$D$-stable} (or just \emph{stable}) if $D(p,p) > 0$.
	Note that if $p$ is stable, then $D^i(p,p) > 0$ for every $i > 0$ (recall that $D$ is idempotent). Furthermore, $D \cdot e_p = e_p + z$ for some $z \in \natsemiring^{Q}$.
	Suppose that $p$ is stable and $D \cdot e_p = e_p + z$ for some vector $z$. Then for $i > 0$:
	$$
	\begin{array}{rclrcl}
		x^T \cdot D^{i+1} \cdot e_p & = & x^T \cdot D^{i} \cdot (e_p + z) & = & x^T \cdot D^{i} \cdot e_p + x^T \cdot D^{i} \cdot z
	\end{array}
	$$
	Given that $D$ is idempotent and $D^i \equiv_{\bool_\infty} D$, by Lemma~\ref{lemma:equiv-pos} we have that $x^T \cdot D^{i} \cdot z > 0$ if, and only if, $x^T \cdot D \cdot z > 0$.
	Therefore, if $x^T \cdot D \cdot z > 0$, we get that $x^T \cdot D^{i} \cdot e_p < x^T \cdot D^{i+1} \cdot e_p$ for every $i > 0$, in particular, for every $i \geq |Q|$. Otherwise, $x^T \cdot D \cdot z = 0$ and $x^T \cdot D^{i} \cdot e_p = x^T \cdot D^{i+1} \cdot e_p$ for every $i > 0$, in particular, for every $i \geq |Q|$.

	Consider the relation $\preceq_{D} \mathop{\subseteq} Q\times Q$ such that $p \preceq_{D} q$ if $p = q$ or $D(p,q) > 0$. Let $P \subseteq Q$ be the set of all non-stable states in $D$.
	One can easily check that $\preceq_{D}$ restricted to $P$ forms a partial order, namely, that $\preceq_{D}$  is reflexive, antisymmetric, and transitive. Indeed, transitivity holds because $D$ is idempotent.
	To prove antisymmetry, note that for every non-stable states $p$ and $q$, if $p \preceq_{D} q$, $q \preceq_{D} p$ and $p \neq q$ hold, then $D(p,p) > 0$.
	This is a contradiction since $p$ is non-stable.

	Since $\preceq_D$ is a partial order, we prove the lemma for $y = e_p$ by induction over $\preceq_D$.
	Formally, we strengthen the inductive hypothesis such that conditions~\eqref{eq:plus-lemma-cond1} and~\eqref{eq:plus-lemma-cond2} hold for every $i \geq N_{q}$, where $N_q = |\{q' \in P \ | \  q' \preceq_D q\}|$ (notice that $N_q \le |Q|$ for every $q$).
	The base case is for $N_p = 1$, which means that $p$ is stable. For the inductive case suppose that $N_p > 1$. Then
	$$
	x^T \cdot D^{i+1} \cdot e_p \ = \ x^T \cdot D^{i} \cdot (c_1 \cdot e_{q_1} + \ldots + c_k \cdot e_{q_k}) \ = \ c_1(x^T \cdot D^{i} \cdot e_{q_1}) + \ldots + c_k(x^T \cdot D^{i} \cdot e_{q_k})
	$$
	for pairwise different states $q_1, \ldots, q_k$ and positive values $c_1, \ldots, c_k \in \nat$ such that $q_j$ is either stable or $q_j \prec_D p$.
	Thus all states $q_1, \ldots, q_k$ satisfy our inductive hypothesis.

	Consider the partition of $q_1, \ldots, q_k$ into sets $C_{=}$ and $C_{<}$ such that $C_{=}$ and $C_<$ satisfy condition \eqref{eq:plus-lemma-cond1} and \eqref{eq:plus-lemma-cond2}, respectively.
	If $C_{<} = \emptyset$, then for every $i \geq N_p$ we have:
\begin{align}
	x^T \cdot D^{i+1} \cdot e_p \;\;& = \;\; c_1(x^T \cdot D^{i} \cdot e_{q_1}) + \ldots + c_k(x^T \cdot D^{i} \cdot e_{q_k}) \nonumber \\
& = \;\; c_1(x^T \cdot D^{i-1} \cdot e_{q_1}) + \ldots + c_k(x^T \cdot D^{i-1} \cdot e_{q_k}) \nonumber  \\
& = \;\; x^T \cdot D^{i} \cdot e_p \label{eq:last}.
\end{align}
	Note that $x^T \cdot D^{i} \cdot e_{q_j} = x^T \cdot D^{i-1} \cdot e_{q_j}$ holds by the inductive hypothesis and because $N_p > N_{q_j}$ for every $q_j$.
	Suppose otherwise, that $C_{<} \neq \emptyset$ and there exists a state $q_j$ that satisfies $x^T \cdot D^i \cdot e_{q_j} < x^T \cdot D^{i+1} \cdot e_{q_j}$ for every $i \geq N_{q_j}$. Then it is straightforward that equality~\eqref{eq:last} becomes a strict inequality and condition~\eqref{eq:plus-lemma-cond2} holds.

	We have shown that either~(\ref{eq:plus-lemma-cond1}) or~(\ref{eq:plus-lemma-cond2}) holds for $y = e_p$. It remains to extend this to any vector~$y \in \natsemiring^Q$ (possibly with $\infty$). Note that
	$$
	x^T \cdot D^{i+1} \cdot y \ = \ y(q_1) \cdot (x^T \cdot D^{i+1} \cdot e_{q_1}) + \ldots + y(q_k)\cdot (x^T \cdot D^{i+1} \cdot e_{q_k})
	$$
	for some states $q_1, \ldots, q_k$ such that $y(q_j) > 0$ for every $j \leq k$.
	We consider two cases.
	First, if there exists $j$ such that $y(q_j) = \infty$ and $x^T \cdot D^{i} \cdot e_{q_j} > 0$ for $i \geq N$, then $x^T \cdot D^{i} \cdot y = \infty$ for every $i \geq 0$.
	Thus, $x^T \cdot D^{i} \cdot y$ satisfies condition~(\ref{eq:plus-lemma-cond1}).
	Second, suppose that for every $j$ we have  $y(q_j) \neq \infty$ or $x^T \cdot D^{i} \cdot e_{q_j} = 0$ for $i \geq N$. It suffices to consider the case when $y(q_j) \neq \infty$ for all $j$. Then if some $x^T \cdot D^{i} \cdot e_{q_j}$ satisfies condition~(\ref{eq:plus-lemma-cond2}) we have that $x^T \cdot D^{i} \cdot y$ satisfies condition~(\ref{eq:plus-lemma-cond2}).
	Conversely, if every $x^T \cdot D^{i} \cdot e_{q_j}$ satisfies condition~(\ref{eq:plus-lemma-cond1}) we have that $x^T \cdot D^{i} \cdot y$ satisfies condition~(\ref{eq:plus-lemma-cond1}).
\end{proof}

One could try to simplify Theorem~\ref{theorem:pumping_refined} changing the condition $i \geq N$ to $i \ge 0$. Unfortunately, we do not know if the theorem would remain true. A naive approach would be to use a generalisation of Lemma~\ref{lemma:ramsey}, but intuitively, the behaviour of non-stable states is problematic.
We conclude with the following remarks, straightforward from the proof. We will use them in Section~\ref{sec:fin_min}.
	\begin{rem}
\label{remark:pluslemma2}
Changing $y$ to $y'$ such that $y \equiv_{\bool_\infty} y'$ does not influence whether condition~\eqref{eq:plus-lemma-cond1} or condition~\eqref{eq:plus-lemma-cond2} holds in Lemma~\ref{lemma:nat-idempotent} (notice that here we need that the abstractions have values in $\bool_\infty$ not in $\bool$). Similarly, changing $x$ to $x'$ such that $x \equiv_{\bool_\infty} x'$ does not influence whether condition~\eqref{eq:plus-lemma-cond1} or~\eqref{eq:plus-lemma-cond2} holds.
\end{rem}

\begin{rem}
\label{remark:pluslemma}
The constant $N$ and the refinement of $w$ depend only on the finite
monoid $\bool_\infty^{Q \times Q}$. In particular they are independent
from the initial and final vectors $I$ and $F$.
\end{rem}

\section{Finite-min recognisable functions}
\label{sec:fin_min}
In this section we focus on recognisable functions over $\natsemiring$ with
some $\min$ operations allowed.
Formally, we say that $f: \Sigma^* \rightarrow \natinf$ is a finite-min recognisable function, if there exist recognisable functions $f_1, \ldots, f_m$ over $\natsemiring$ such that $f(w) = \min\{f_1(w), \ldots, f_m(w)\}$.
It is known that over $\trop$, FA-WA are equivalent to a finite minimum of
U-WA~\cite{Weber94}, hence the
functions defined by FA-WA  are included in the class of finite-min recognisable functions.
As a corollary of the pumping lemma in this section we show that PA-WA are strictly more expressive than FA-WA over $\trop$ (Example~\ref{ex:finite-min} and Example~\ref{ex:finite-min2}).

We start by introducing some notation. Generalising the notation used
in the previous section, we define for $n >0$ an \emph{$n$-pumping
  representation}  for a
word $w \in \Sigma^*$ as a factorisation of the form
\[
w = u_0 \cdot \underline{v_1} \cdot u_1 \cdot \underline{v_2} \cdot \ldots u_{n-1} \cdot \underline{v_n} \cdot u_n,
\]
where $w = u_0 \cdot v_1 \cdot u_1 \cdot v_2 \cdot \ldots v_n \cdot u_n$ and $v_k \neq \epsilon$ for all $k$.
A refinement of an $n$-pumping representation for $w$ is given by
\[
w = u_0' \cdot \underline{y_1} \cdot u_1' \cdot \underline{y_2} \cdot \ldots u_{n-1}' \cdot \underline{y_n} \cdot u_n',
\]
if $v_k = x_k \cdot y_k \cdot z_k$, $u_k' = z_k \cdot u_k \cdot x_{k+1}$; where $z_0 = x_{n+1} = \epsilon$ and $y_k \neq \epsilon$ for every $k$.
Let $S \subseteq \{1, \ldots, n\}$ such that $S \neq\emptyset$. Let
$\underline{y_k}$ be a factor of the refined $n$-pumping
representation of $w$. By $\underline{y_k}(S,i)$ we denote the word $y_k^i$ if $k \in S$ and $y_k$ otherwise.
By $w(S,i)$ we denote the word
$$
w = u'_0 \cdot \underline{y_1}(S,i) \cdot u'_1 \cdot \underline{y_2}(S,i) \cdot \ldots u'_{n-1} \cdot \underline{y_n}(S,i) \cdot u'_n.
$$
In other words, in  $w(S,i)$ we pump $i$ times each factor $y_k$, for
all $k \in S$. Note that the pumping always refers to the
\emph{refinement} of the $n$-pumping representation.
%

\begin{thm}[Pumping Lemma for finite-min recognisable functions]\label{theorem:some_min}
Let $f : \Sigma^* \to \natinf$ be a finite-min recognisable function. There exists $N$ such that for every $n$-pumping representation
\[
w = u_0 \cdot \underline{v_1} \cdot u_1 \cdot \underline{v_2} \cdot \ldots u_{n-1} \cdot \underline{v_n} \cdot u_n,
\]
where $|v_i| \ge N$ for all $i$, there exists a refinement
\[
w = u_0' \cdot \underline{y_1} \cdot u_1' \cdot \underline{y_2} \cdot \ldots u_{n-1}' \cdot \underline{y_n} \cdot u_n',
\]
such that for every sequence of non-empty, pairwise different subsets $S_1, \dots, S_{k} \subseteq \{1, \ldots, n\}$ with $k \geq N$ one of the following holds:
\begin{itemize}
 \item there exists $j$ such that $f(w(S_j, i)) < f(w(S_j, i+1))$ for
   all but finitely many $i$;
 \item there exist $j_1 \neq j_2$ such that $f(w(S_{j_1} \cup S_{j_2},
   i)) = f(w(S_{j_1} \cup S_{j_2}, i+1))$ for all but finitely many $i$.
\end{itemize}
\end{thm}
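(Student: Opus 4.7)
The plan is to reduce to the single-function setting of Theorem~\ref{theorem:pumping_refined}, applied simultaneously to all the weighted automata computing $f_1, \ldots, f_m$, where $f = \min_\ell f_\ell$. First I would fix WA $\cA_1, \ldots, \cA_m$ over $\natsemiring$ with $\asem{\cA_\ell} = f_\ell$ and state sets $Q_\ell$, and work with the product monoid $\prod_\ell \bool_\infty^{Q_\ell \times Q_\ell}$. By Remark~\ref{remark:pluslemma} the refinement produced in the proof of Theorem~\ref{theorem:pumping_refined} depends only on the abstraction monoid; applying Lemma~\ref{lemma:ramsey} factor by factor (to each $v_i$) inside this product monoid therefore yields a single refinement $w = u_0' \cdot y_1 \cdot u_1' \cdots y_n \cdot u_n'$ such that, for every $\ell$ and every $k$, the matrix $M^{(\ell)}_{y_k}$ is idempotent in abstraction. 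I would take $N$ large enough to exceed both the Ramsey bound for this product monoid and the number $m$ of minimands.

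The next step is a multivariate analog of Lemma~\ref{lemma:nat-idempotent}: for every $\ell$ and every non-empty $S \subseteq \{1, \ldots, n\}$, the sequence $i \mapsto f_\ell(w(S, i))$ is eventually constant or strictly increasing. Expanding the matrix product over boundary-state tuples $\vec{r}$ (state pairs at the entry and exit of each pumped factor), one obtains an identity of the form
\[
f_\ell(w(S, i)) \ = \ \sum_{\vec{r}} C_\ell(\vec{r}) \cdot \prod_{k \in S} (M^{(\ell)}_{y_k})^i(\vec{r}_k) \cdot \prod_{k \notin S} M^{(\ell)}_{y_k}(\vec{r}_k),
\]
where $\vec{r}_k$ is the entry-exit pair at the $k$-th pumped factor and $C_\ell(\vec{r}) \in \natinf$ absorbs the initial vector, the final vector, and the entries of the $M^{(\ell)}_{u_k'}$. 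Each factor $(M^{(\ell)}_{y_k})^i(\vec{r}_k)$ is monotone non-decreasing in $i$ and, by the single-idempotent argument behind Lemma~\ref{lemma:nat-idempotent}, eventually constant or strictly increasing. Non-negative products and sums preserve this dichotomy, so $f_\ell(w(S, \cdot))$ is eventually constant or strictly increasing.

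This dichotomy transfers to $f$: the minimum of eventually non-decreasing sequences is eventually non-decreasing, and it is bounded (hence eventually constant in $\natinf$) iff some $f_\ell$ is eventually constant, and strictly increasing otherwise. If some $S_j$ makes $f(w(S_j, i))$ strictly increasing, we are in the first case of the theorem. Otherwise, for each $j$ choose $\ell(j)$ with $f_{\ell(j)}(w(S_j, i))$ eventually constant; since $k \geq N > m$, pigeonhole produces $j_1 \neq j_2$ with $\ell(j_1) = \ell(j_2) = \ell$, i.e.\ $S_{j_1}, S_{j_2} \in \mathcal{C}_\ell := \{S : f_\ell(w(S, \cdot)) \text{ is eventually constant}\}$.

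The main obstacle is the subclaim that $\mathcal{C}_\ell$ is closed under pairwise unions, after which $S_{j_1} \cup S_{j_2} \in \mathcal{C}_\ell$ delivers the second case. I would prove this via the sum expansion above. Downward closure is straightforward from the monotonicity of each factor in $\natsemiring$. For union closure, observe that any $\vec{r}$ contributing positively to $f_\ell(w(S_{j_1} \cup S_{j_2}, i))$ also contributes positively to the singleton-pumping expansion for any $\{k\}$ with $k \in S_{j_1} \cup S_{j_2}$, because all the non-pumped factors along $\vec{r}$ are then positive. Hence the singleton-constancy forced by $S_{j_1} \in \mathcal{C}_\ell$ and $S_{j_2} \in \mathcal{C}_\ell$ (via downward closure) forces every $(M^{(\ell)}_{y_k})^i(\vec{r}_k)$ with $k \in S_{j_1} \cup S_{j_2}$ to be eventually constant, so each product is eventually constant and so is the whole sum.
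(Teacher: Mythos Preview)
Your overall architecture is the same as the paper's: build one abstraction monoid over all the $\cA_\ell$ simultaneously, refine each $v_i$ to an idempotent $y_i$, establish for every $\ell$ and every $S$ the dichotomy ``$f_\ell(w(S,\cdot))$ is eventually constant or eventually strictly increasing'', transfer this to $f=\min_\ell f_\ell$, and finish by pigeonhole on the $m$ minimands together with closure of $\mathcal{C}_\ell$ under unions. The paper packages the last step as Lemma~\ref{lemma:plus_only}, proved via Remark~\ref{remark:pluslemma2}: the constant/strictly-increasing alternative for $x^T D_k^i y$ depends only on the $\bool_\infty$-abstractions of $x,y$, and since the other $D_{k'}$ are idempotent, replacing $D_{k'}$ by $D_{k'}^{i_0}$ does not change those abstractions. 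This immediately gives that $S\in\mathcal{C}_\ell$ iff $\{k\}\in\mathcal{C}_\ell$ for all $k\in S$, hence both downward and union closure. Your route via the explicit boundary-state expansion is a legitimate alternative that reproves the same fact more combinatorially.

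There is, however, one genuine slip. The assertion that each entry $(M^{(\ell)}_{y_k})^i(\vec r_k)$ is monotone non-decreasing in $i$ is false. For instance, over $\natsemiring$ take $Q=\{p,r,q\}$ and let $D$ have $D(p,q)=100$, $D(p,r)=D(r,q)=D(r,r)=1$, all other entries $0$; then $\bar D$ is idempotent, but $D(p,q)=100$ while $D^2(p,q)=D(p,r)D(r,q)=1$. What \emph{is} true, and is exactly what Lemma~\ref{lemma:nat-idempotent} provides, is that each such entry is \emph{eventually} constant or \emph{eventually} strictly increasing (for $i\ge |Q|$); in particular it is eventually non-decreasing. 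This weaker statement is all your argument actually needs, both for the dichotomy on $f_\ell(w(S,\cdot))$ and for your downward/union closure of $\mathcal{C}_\ell$: once every factor in a positively contributing $\vec r$-term is eventually non-decreasing and the sum is eventually a fixed finite value, each factor must be eventually constant, and the subproducts for any $T\subseteq S$ inherit this. (The $\infty$ case is harmless: with the standard preprocessing only $F$ carries $\infty$, so $f_\ell(w(S,i))=\infty$ for some $i$ iff for all $i$, independently of $S$.) If you replace ``monotone non-decreasing'' by ``eventually non-decreasing, by Lemma~\ref{lemma:nat-idempotent}'', your expansion argument goes through and is equivalent to the paper's Lemma~\ref{lemma:plus_only}.
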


Before proving Theorem~\ref{theorem:some_min}, we show how to use it with two examples.

\begin{exa}\label{ex:finite-min}
We show that $f_3$ from Example~\ref{ex:min-a_to_b} is not definable by finite-min recognisable functions. Indeed, fix $N$ from Theorem~\ref{theorem:some_min} and consider the $n$-pumping representation $w = (\underline{b^N} \cdot \underline{a^N})^N$.
We index each pumping factor with a pair $(s, j)$, where $j \le N$
denotes the block $b^N a^N$, and $s \in \{1,2\}$ denotes the factor in the block.
First, notice that $f_3(w) = N \cdot (N-1)$ because runs minimising the value for $\mathcal{W}_3$ change the state after reading the last $b$ in one of the blocks. We define the sets $S_j = \{(1,j), (2,j)\}$ for $j \in \{1,\ldots, N\}$. Clearly $f_3(w(S_j,i)) = f_3(w)$ for all $j$ and $i$, because the run minimising the value changes the state after the last $b$ in the $j$-th block. On the other hand $f_3(w(S_{j_1} \cup S_{j_2}, i)) < f_3(w(S_{j_1} \cup S_{j_2}, i+1))$ for all $i$ and $j_1 \neq j_2$. Hence $f_3$ does not satisfy the pumping lemma for finite-min recognisable functions.
\end{exa}

\begin{exa}\label{ex:finite-min2}
We show that $f_4$ from Example~\ref{ex:min-b-substrings} is not definable by finite-min recognisable functions. Indeed, fix $N$ from Theorem~\ref{theorem:some_min}. Consider the $N$-pumping representation $w = (\underline{b^N}a)^N$. Then by definition $f_4(w) = N$. In the refinement all pumping parts will be of the form $b^{n}$ for $1 \le n \le N$. We define the sets $S_j = \{1, \ldots, N\} \setminus \{j\}$ for all $1 \le i \le N$. Clearly $f_4(w(S_j, i)) = N$ for all $j$ and $i$. On the other hand $f_4(w(S_{j_1} \cup S_{j_2}, i)) < f_4(w(S_{j_1} \cup S_{j_2}, i+1))$ for all $i$ and $j_1 \neq j_2$. Hence $f_4$ does not satisfy the pumping lemma for finite-min recognisable functions.
\end{exa}

\begin{proof}[Proof of Theorem~\ref{theorem:some_min}]
Let $f_1, \ldots, f_m$ be recognisable functions over $\natsemiring$ such that $f(w) = \min\{f_1(w), \ldots, f_m(w)\}$ for every $w$.
Furthermore, consider $\cA_j = (Q_j, \Sigma, \{M_{j,a}\}_{a \in \Sigma}, I_j, F_j)$ the corresponding WA for $f_j$.
Let $Q = \bigcup_j Q_j$ (we assume that $Q_1, \ldots, Q_m$ are pairwise disjoint) and consider the set of matrices $\{U_a\}_{a \in \Sigma}$ where $U_a \in \natsemiring^{Q \times Q}$ such that
$U_a(p,q) = M_{j,a}(p,q)$ whenever $p, q \in Q_j$ and $0$ otherwise.
Then $f_j(w) = (I_j')^t \cdot U_w \cdot F_j'$ for every $j$ and $w \in \Sigma^*$ where $I_j'$ and $F_j'$ are the extensions of $I_j$ and $F_j$ from $Q_j$ into $Q$ such that $I_j'(q) = I_j(q)$ and $F_j'(q) = F_j(q)$ whenever $q \in Q_j$ and $0$ otherwise.
Notice that $\{U_a\}_{a \in \Sigma}$ synchronise the behaviour of $f_1, \ldots, f_m$ in a single set of matrices and project the output of $f_j$ with $I_j'$ and $F_j'$.
Let $N = \max\{K, m+1\}$ such that $K$ is the constant from Lemma~\ref{lemma:ramsey} applied to $\bool_\infty^{Q \times Q}$.

Let
$
w = u_0 \cdot \underline{v_1} \cdot u_1 \cdot \underline{v_2} \cdot \ldots u_{n-1} \cdot \underline{v_n} \cdot u_n.
$
be an $n$-pumping representation as in the statement of the theorem.
For every $i$ we apply Theorem~\ref{theorem:pumping_refined} to
$u_{\leq i} \cdot \underline{v_i} \cdot t_{\geq i}$, where $u_{\leq i}
= u_0 \cdot v_1 \cdot \ldots u_{i-1}$ and $t_{\geq i} = u_i \cdot
v_{i+1} \cdot \ldots u_n$, and $\{U_a\}_{a \in \Sigma}$ (recall that
the refinement of $u_{\leq i} \cdot \underline{v_i} \cdot t_{\geq i}$
depends only on $\{U_a\}_{a \in \Sigma}$, and not on the initial or
final vector, see Remark~\ref{remark:pluslemma}). As in the proof of
Theorem~\ref{theorem:pumping_refined} we obtain a refinement
\[
w \ = \  u_0' \cdot \underline{y_1} \cdot u_1' \cdot \underline{y_2} \cdot \ldots u_{n-1}' \cdot \underline{y_n} \cdot u_n',
\]
where each $y_i$ is idempotent w.r.t.~$\{U_a\}_{a \in \Sigma}$.

Note that the refinement is the same for each function $f_j$.
Therefore, we obtain
\[
f_j(w) \ = \  (I_j')^t \cdot U_{u_0'} \cdot D_1 \cdot \ldots \cdot U_{u_{n-1}'} \cdot D_n \cdot  U_{u_n'} \cdot F_j'
\]
where all $D_i = U_{y_i}$ are idempotents.

\begin{lem}
\label{lemma:plus_only}
Let $S \subseteq \{1,\ldots,n\}$ be a non-empty set and fix one function $f_j$. Then $f_j(w(S,i)) < f_j(w(S,i+1))$ for every $i \ge N$ iff there exists $k \in S$ such that $f_j(w(\{k\},i)) < f_j(w(\{k\},i+1))$ for every $i \ge N$.
\end{lem}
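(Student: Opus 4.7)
The plan is to exploit the matrix form already derived in the proof of Theorem~\ref{theorem:some_min}:
\[
f_j(w(S,i)) \;=\; (I_j')^t \cdot U_{u_0'} \cdot D_1^{e_1} \cdot U_{u_1'} \cdot D_2^{e_2} \cdots D_n^{e_n} \cdot U_{u_n'} \cdot F_j',
\]
where $e_k = i$ if $k \in S$ and $e_k = 1$ otherwise, and each $D_k$ is idempotent. Since Lemma~\ref{lemma:nat-idempotent} controls the effect of pumping a \emph{single} idempotent, the strategy is to isolate one $D_k$ at a time, classify its contribution as either \emph{neutral} (pumping $D_k$ leaves the value unchanged) or \emph{strict} (pumping $D_k$ strictly increases the value), and then chain these one-coordinate analyses to compare $f_j(w(S,i))$ with $f_j(w(S,i+1))$.

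Concretely, for each $k$ I would write $f_j(w(S,i)) = x_k^T \cdot D_k^{e_k} \cdot y_k$, where $x_k$ collects everything to the left of $D_k^{e_k}$ and $y_k$ everything to the right. Because $\infty$-entries only occur in $F_j'$, the matrix $D_k$ and the vector $x_k$ are $\infty$-free, so Lemma~\ref{lemma:nat-idempotent} applies and yields, for all $e_k \ge |Q|$, that $x_k^T \cdot D_k^{e_k} \cdot y_k$ is either constant in $e_k$ or strictly increasing in $e_k$. The crucial observation is Remark~\ref{remark:pluslemma2}: which of conditions \eqref{eq:plus-lemma-cond1} or \eqref{eq:plus-lemma-cond2} holds depends only on the abstractions $\bar{x}_k,\bar{y}_k \in \bool_\infty^Q$, and since every $D_l$ is idempotent we have $D_l^{e_l} \equiv_{\bool_\infty} D_l$ for every $e_l \ge 1$. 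Hence $\bar{x}_k$ and $\bar{y}_k$ are the same no matter which of the values $e_l \in \{1, i, i+1\}$ we assign at indices $l \ne k$, and each $k$ acquires a well-defined type $T_k \in \{\mathrm{eq},\mathrm{str}\}$.

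To compare $f_j(w(S,i))$ with $f_j(w(S,i+1))$, I would enumerate the elements of $S$ and increment their exponents from $i$ to $i+1$ one by one, producing a chain
\[
f_j(w(S,i)) \;=\; a_0 \;\le\; a_1 \;\le\; \cdots \;\le\; a_{|S|} \;=\; f_j(w(S,i+1)),
\]
in which the step at index $k$ is an equality when $T_k = \mathrm{eq}$ and a strict inequality when $T_k = \mathrm{str}$ (assuming $N \ge |Q|$, which we may ensure by enlarging $N$). Specialising the chain to $S = \{k\}$ shows that $T_k = \mathrm{str}$ is equivalent to $f_j(w(\{k\},i)) < f_j(w(\{k\},i+1))$ for all $i \ge N$. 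Both directions of the lemma follow: if some $k \in S$ has $T_k = \mathrm{str}$, the chain contains a strict step and $f_j(w(S,i)) < f_j(w(S,i+1))$; conversely, if every $k \in S$ has $T_k = \mathrm{eq}$ (equivalently, every singleton gives equality by Lemma~\ref{lemma:nat-idempotent}), the chain collapses to a single equality, so the strict inequality $f_j(w(S,i)) < f_j(w(S,i+1))$ cannot hold.

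The main obstacle is making the independence claim rigorous, namely that the type $T_k$ is the same for every configuration of exponents at the other coordinates encountered in the chain. This is exactly what the combination of Remark~\ref{remark:pluslemma2} with the idempotency of the $D_l$'s supplies: the abstractions $\bar{x}_k,\bar{y}_k$ stay constant as the other $e_l$ range over $\{1,i,i+1\}$, so a single application of Lemma~\ref{lemma:nat-idempotent} per coordinate suffices.
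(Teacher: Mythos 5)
Your proposal is correct and takes essentially the same route as the paper: both isolate each idempotent $D_k$ by writing the value as $x_k^T \cdot D_k^{e_k} \cdot y_k$, observe via idempotency that the abstractions of the flanking products are independent of the exponents at the other coordinates, and conclude by combining Lemma~\ref{lemma:nat-idempotent} with Remark~\ref{remark:pluslemma2}. Your explicit one-coordinate-at-a-time chain (and the remark that $N \ge |Q|$ may need to be enforced) merely spells out the combination step that the paper leaves implicit in its closing sentence.
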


\begin{proof}
By definition
$
f_j(w(S,i)) =  (I_j')^t \cdot U_{u_0'} \cdot D_1^{s_1} \cdot \ldots \cdot U_{u_{n-1}'} \cdot D_n^{s_n} \cdot  U_{u_n'} \cdot F_j'
$
where $s_k = i$ if $k \in S$ and $s_k = 1$ otherwise. Since all $D_i$ are idempotents then for all $k$: 
\begin{align*}
(I_j')^t \cdot U_{u_0'} \cdot D_1^{s_1} \cdot \ldots \cdot D_{k-1}^{s_{k-1}} \cdot U_{u_{k-1}'} & \ \equiv_{\bool_\infty} \ (I_j')^t \cdot U_{u_0'} \cdot D_1 \cdot \ldots \cdot D_{k-1} \cdot U_{u_{k-1}'} \\
U_{u_{k}'} \cdot D_{k+1}^{s_{k+1}} \cdot \ldots \cdot D_n^{s_n} \cdot  U_{u_n'} \cdot F_j' & \  \equiv_{\bool_\infty} \  U_{u_{k}'} \cdot D_{k+1} \cdot \ldots \cdot D_n \cdot  U_{u_n'} \cdot F_j'.
\end{align*}
Hence, the lemma follows from Remark~\ref{remark:pluslemma2}.
\end{proof}

To finish the proof we analyse $f(w(S, i)) = \min\{f_1(w(S, i)), \ldots , f_m(w(S, i))\}$. Consider a sequence of subsets $S_1, \ldots, S_{k}$ with $k \geq N$.
Suppose there is a set $S_l$ 
such that for every $1 \leq j \leq
m$, 
we have $f_j(w(S_l,i)) < f_j(w(S_l,i+1))$ for every $i \ge N$. In this
case,
$f(w(S_l,i)) < f(w(S_l,i+1))$ holds for
all $i \ge N$, so the first condition of the theorem
is met.

Suppose otherwise that no such $S_l$ exists. In particular, for every
$S_l$ there is at least one $j$ such that
$f_j(w(\{s\},i))=f_j(w(\{s\},i+1))$ for all $i \ge N$ and all $s \in S_l$, hence
$f_j(w(S_l,i))=f_j(w(S_l,i+1))$ for all $i \ge N$. For every $S_l$ let
$X_l \subseteq \{1, \ldots, m\}$ be the set of indices $j$ such that
$f_{j}(w(S_l,i)) = f_{j}(w(S_l,i+1))$ for all $i \ge N$. By the above
assumptions, every $X_l$ is non-empty. Since $k \geq N > m$ there
exists $l_1,l_2$ such that $X_{l_1} \cap X_{l_2} \neq \emptyset$. From
Lemma~\ref{lemma:plus_only} it follows that for $i \ge N$ it holds: $f_j(w(S_{l_1} \cup S_{l_2},i)) = f_j(w(S_{l_1} \cup S_{l_2},i+1))$ for all $j \in X_{l_1} \cap X_{l_2}$; and $f_j(w(S_{l_1} \cup S_{l_2},i)) < f_j(w(S_{l_1} \cup S_{l_2},i+1))$ for all $j \in \{1,\ldots, m\} \setminus (X_{l_1} \cap X_{l_2})$. Hence for $i$ sufficiently large $f(w(S_{l_1} \cup S_{l_2}, i)) = \min_{j \in X_{l_1} \cap X_{l_2}}(f_j(w(S_{l_1} \cup S_{l_2},i)))=f(w(S_{l_1} \cup S_{l_2}, i+1))$, which concludes the proof.
\end{proof}

\section{Poly-ambiguous recognisable functions over the min-plus semiring}
\label{sec:pumping}
In this section we focus on polynomial-ambiguously recognisable functions
over $\trop$. We expect that there is a wider class of functions,
definable like in the previous section, where
Theorem~\ref{theorem:polynomial} holds, but this is left for future
work. A consequence of this section is that WA are strictly more
expressive than PA-WA (see Examples~\ref{ex:polynomial} and~\ref{ex:polynomial2}).

We will use in the following the notation of $n$-pumping representations from Section~\ref{sec:fin_min}.
A sequence of non-empty sets $S_1, \ldots, S_m$ over $\{1, \ldots, n
\}$ is called a \emph{partition} if the sets are pairwise disjoint and
their union is $\{1,\ldots,n\}$.
Furthermore, we say that $S \subseteq \{1, \ldots, n\}$ is a \emph{selection set} for $S_1, \ldots, S_m$ if $|S \cap S_i| = 1$ for every $i$.

\begin{thm}[Pumping Lemma for polynomially-ambiguous automata]
	\label{theorem:polynomial}
	Let $f : \Sigma^* \to \natinf$ be a polynomial-ambiguously recognisable function over $\trop$. There exist $N$ and a function $\varphi: \nat \to \nat$ such that for every $n$-pumping representation:
	\[
	w = u_0 \cdot \underline{v_1} \cdot u_1 \cdot \underline{v_2} \cdot \ldots \cdot u_{n-1} \cdot \underline{v_n} \cdot u_n,
	\]
	where $|v_i| \geq N$ for every $i \leq n$, there exists a
        refinement:
	\[
	w = u_0' \cdot \underline{y_1} \cdot u_1' \cdot \underline{y_2} \cdot \ldots u_{n-1}' \cdot \underline{y_n} \cdot u_n',
	\]
	such that for every partition $\pi = S_1,\ldots,S_m$ of $\{1, \ldots, n\}$ with $m \geq \varphi(\max_j(|S_j|))$, one of the following holds:
	\begin{itemize}
		\item there exists $j$ such that $f(w(S_j, i)) =
                  f(w(S_j, i+1))$ for all but finitely many $i$;
		\item there exists a selection set $S \subseteq
                  \{1,\ldots,n\}$ for $\pi$ such that $f(w(S, i)) <
                  f(w(S, i+1))$ for all but finitely many $i$.
	\end{itemize}
\end{thm}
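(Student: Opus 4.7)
The plan is to follow the overall pattern of Theorems~\ref{theorem:pumping_refined} and~\ref{theorem:some_min}: refine the pumped factors into idempotents via Lemma~\ref{lemma:ramsey}, decompose $f(w(T,i))$ into shape-indexed affine functions of $i$, and then conclude combinatorially, now using polynomial ambiguity in an essential way. Let $\cA = (Q, \Sigma, \{M_a\}_a, I, F)$ be a PA-WA computing $f$. Set $N = \max(|Q|, K)$ where $K$ is the Ramsey constant of Lemma~\ref{lemma:ramsey} for the finite monoid $\bool^{Q \times Q}$, and refine each $v_k$ so that $D_k := M_{y_k}$ has boolean-idempotent abstraction; this refinement is the same for all pumping sets $T$. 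Index each run of $\cA$ on $w(T,i)$ by its shape $\sigma = (e_1, f_1, \ldots, e_n, f_n) \in Q^{2n}$, where $e_k, f_k$ are the states at the two boundaries of the pumped block $y_k^{t_k}$ (with $t_k = i$ for $k \in T$ and $t_k = 1$ otherwise). Absorbing into an offset $C_T(\sigma)$ the cost of reading $u_0', \ldots, u_n'$ and the initial/final vectors, one obtains
\[
f(w(T,i)) \;=\; \min_{\sigma \in \mathcal{F}} \Bigl[ C_T(\sigma) + \sum_{k \in T} D_k^{i}(e_k, f_k) \Bigr],
\]
where $\mathcal{F} \subseteq Q^{2n}$ is the (finite) set of feasible shapes.

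Next, since each $\bar{D_k}$ is boolean-idempotent, standard min-plus spectral theory yields that $D_k^i(e,f) = \alpha_k(e,f) + i \cdot \beta_k(e,f)$ for $i$ sufficiently large, with $\beta_k(e,f) \in \natpinf$ equal to the minimum mean weight of a cycle reachable from $e$ that reaches $f$; in particular, $\beta_k(e,f) \geq 0$. Consequently, for large $i$, $f(w(T,i)) = C''(T) + i \cdot B(T)$ where $B(T) := \min_{\sigma \in \mathcal{F}} \sum_{k \in T} \beta_k(e_k, f_k)$, so $f(w(T,i))$ is eventually constant exactly when $B(T) = 0$ and eventually strictly increasing exactly when $B(T) > 0$. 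Setting $Z(\sigma) := \{k : \beta_k(e_k, f_k) = 0\}$, the two alternatives of the theorem translate as: either some block $S_j$ satisfies $S_j \subseteq Z(\sigma)$ for some $\sigma \in \mathcal{F}$, or some selection set $S$ satisfies $S \not\subseteq Z(\sigma)$ for every $\sigma \in \mathcal{F}$.

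Finally, the role of polynomial ambiguity is to bound $|\mathcal{F}|$: since $|\Run_\cA(w)|$ (evaluated at $T=\emptyset, i=1$) is polynomial in $|w|$ and hence in $n$, and since every feasible shape appears in at least one such run, $|\mathcal{F}| \le p(n)$ for a polynomial $p$ depending only on $\cA$. Assume for contradiction that both alternatives fail; then $|S_j \cap Z(\sigma)| \le |S_j| - 1$ for all $j, \sigma$. Pick a random selection set $S$ by sampling one index uniformly and independently from each block $S_j$: for any fixed $\sigma$,
\[
\Pr[S \subseteq Z(\sigma)] \;\le\; \prod_j \bigl(1 - 1/|S_j|\bigr) \;\le\; (1 - 1/l)^m \;\le\; e^{-m/l},
\]
where $l := \max_j |S_j|$ and $m$ is the number of blocks. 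A union bound over $\mathcal{F}$ gives failure probability at most $p(ml)\,e^{-m/l}$, which drops below $1$ once $m > l \ln p(ml)$; any function $\varphi(l)$ growing faster than this threshold suffices to produce the required contradiction. The main obstacle is to combine the eventual-affine law for $D_k^i$ (a classical min-plus fact, but requiring careful handling of the cycle structure within each boolean-idempotent block) with the polynomial bound on $|\mathcal{F}|$; the latter is where polynomial ambiguity does its essential work, relying on the absence of twin cycles in the transition monoid characteristic of PA-WA.
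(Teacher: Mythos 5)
There are two genuine gaps, both at the points where polynomial ambiguity has to enter. First, your bound on $|\mathcal{F}|$ does not follow as you state it: the inference ``$|\Run_\cA(w)|$ is polynomial in $|w|$ and hence in $n$'' is a non sequitur, because $n$ only counts the pumped blocks while the factors $u_k'$ may be arbitrarily long, so $|w|$ is unbounded for fixed $n$. A bound of the form $p(|w|)$ is useless here: the theorem requires $\varphi$ to depend only on $L=\max_j|S_j|$, and your union bound needs $|\mathcal{F}|\cdot(1-1/L)^m<1$ once $m\geq\varphi(L)$, which forces $|\mathcal{F}|\leq P(n)\leq P(Lm)$ for a polynomial $P$ depending only on $\cA$ (the trivial bound $|Q|^{2n}$ is exponential and also insufficient). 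The paper closes exactly this hole by a shortening argument: each $u_k'$ is replaced by a word $u_k''$ of length at most $|\bool^{Q\times Q}|$ with $\overline{M_{u_k'}}=\overline{M_{u_k''}}$ (pigeonhole on the sequence of abstracted prefixes), yielding a word $w'$ of length linear in $n$ on which every feasible shape is realised by an actual run; polynomial ambiguity applied to $w'$ then gives $|\mathcal{F}|\leq|\Run_\cA(w')|\leq P(n)$. Without this step your final combinatorial argument does not go through.

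Second, the eventual-affine law you invoke is false for general min-plus matrices with idempotent boolean abstraction. Take $Q=\{1,2\}$ and $D$ with $D(1,1)=D(2,2)=10$, $D(1,2)=D(2,1)=0$: then $\bar{D}$ is the all-ones matrix, which is idempotent in $\bool^{Q\times Q}$, yet $D^i(1,1)$ alternates $10,0,10,0,\ldots$, so no $\alpha+i\beta$ form holds for large $i$ (the cyclicity is $2$, and the minimum cycle mean $0$ is attained only by the $2$-cycle). What rescues the claim in the present setting is precisely polynomial ambiguity: since $D_k=M_{y_k}$ for a PA-WA, the graph of $\bar{D}_k$ cannot contain a cycle through two distinct states --- idempotency would then give $\bar{D}_k(r,s)=\bar{D}_k(s,r)=\bar{D}_k(r,r)=\bar{D}_k(s,s)=1$, producing exponentially many runs on powers of $y_k^2$ --- so all cycles are self-loops and $D_k^{b+i}(p,q)=c\cdot i+d$ follows; this is the paper's Lemma~\ref{lemma:polymatrix}. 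Your closing attribution is thus inverted: the ``absence of twin cycles'' is what the affine law relies on, whereas the $|\mathcal{F}|$ bound uses the polynomial run count directly (plus the shortening above). With these two repairs, the rest of your proposal --- the shape decomposition, the reformulation via $Z(\sigma)$ (the complement of the paper's $\bar{R}_k$), and the random selection set with a union bound, which is a cosmetic variant of the paper's deterministic greedy covering and yields the same $\varphi$ --- coincides with the paper's proof.
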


\begin{exa}\label{ex:polynomial}
We show that $f_5$ from Example~\ref{ex:no-poly-ambiguous} is not
definable by any PA-WA.
Indeed, let $N$ and $\varphi$ be the constant and the function from Theorem~\ref{theorem:polynomial}. Consider the following $2m$-pumping representation:
$w = (\underline{a^N} \cdot \underline{b^N}\#)^m$ where $m \geq
\varphi(2)$ (here $\max_i(|S_i|)$ will be equal to 2).
We index the $j$-th block of $a$'s with $j$ and the $j$-th block of $b$'s with $j'$.
We define the subsets $S_1, \ldots, S_m$ as  $S_j = \{j,j'\}$. Clearly, for all $j$ we have $f_5(w(S_j,i)) < f_5(w(S_j,i+1))$. On the other hand for every selection set $S$ we have $f_5(w(S,i)) = f_5(w(S, i+1))$.
Hence $f_5$ does not satisfy Theorem~\ref{theorem:polynomial}.
\end{exa}

\begin{exa}\label{ex:polynomial2}
The function $f_5$ in Example~\ref{ex:no-poly-ambiguous} is essentially the function $f_2$ from Example~\ref{ex:min-letter} applied to the subwords between the symbols $\#$, where the outputs are aggregated with $+$. In a similar way one can define a min-plus automaton recognising $f_6(w) = \sum_{i}f_4(w_i)$ for any $w \in \Sigma^*$ of the form $w_0 \# w_1 \# \ldots \# w_n$ with $w_i \in \{a,b\}^*$, where $f_4$ is the function computing the minimal block of $b$'s from Example~\ref{ex:min-b-substrings}. We show that $f_6$ is not definable by PA-WA over $\trop$. Consider the following $2m$-pumping representation:
$w = (\underline{b^N} \cdot a \cdot \underline{b^N}\#)^m$ where $m
\geq \varphi(2)$ (here $\max_i(|S_i|)$ is again 2).
As in Example~\ref{ex:polynomial}, we index the first $j$-th block of
$b$'s with $j$ and the second $j$-th block of $b$'s with $j'$, and
we set $S_j = \{j,j'\}$, for $1 \leq j \leq m$.
Clearly, for all $j$ we have $f_6(w(S_j,i)) < f_6(w(S_j,i+1))$. On the
other hand for every selection set $S$ we have $f_6(w(S,i)) = f_6(w(S,
i+1))$. Hence $f_6$ does not satisfy Theorem~\ref{theorem:polynomial}
either.
\end{exa}

Consider the set of matrices $\trop^{Q \times Q}$ over the min-plus semiring. Recall that here $\add = \min$, $\mult = +$, $\zero = \infty$, $\one = 0$, and the product of matrices $M, N \in \trop^{Q \times Q}$ is defined by $M \cdot N(p,q) = \min_{r}(M(p,r) + N(r,q))$.
Also, recall that for any $M \in \trop^{Q \times Q}$ we denote by $\bar{M}$ the homomorphic image of $M$ into the finite monoid $\bool^{Q \times Q}$ (see Section~\ref{subsec:monoids}).
Similar as in Section~\ref{sec:pump-sum} and Section~\ref{sec:fin_min}, we say that $D \in \trop^{Q \times Q}$ is an idempotent if $\bar{D}$ is an idempotent in the finite monoid $\bool^{Q\times Q}$.

The following lemma states a special property of polynomially-ambiguous automata that we exploit in the proof of Theorem~\ref{theorem:polynomial}.

\begin{lem}
	\label{lemma:polymatrix}
	Let  $\cA = (Q, \Sigma, \{M_a\}_{a \in \Sigma}, I, F)$ be a polynomially-ambiguous weighted automaton over the min-plus semiring.
	For every idempotent $D \in \{M_w \mid w \in \Sigma^*\}$ and
        for every $p, q \in Q$, there exist constants $c, d \in \trop$ and $b \in \nat$ such that
	$
	D^{b+i}(p, q) = c\cdot i + d
	$ for all $i \ge 0$.
\end{lem}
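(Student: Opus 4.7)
The plan is to read $D^{n}(p,q)$ as the minimum weight of a walk of length exactly $n$ from $p$ to $q$ in the weighted digraph $G_D$ on vertex set $Q$, whose edges are the pairs $(p',q')$ with $D(p',q')<\infty$ carrying weight $D(p',q')$, and to show that for $n$ large this minimum is an affine function of $n$. If $D(p,q)=+\infty$, then idempotency of $\bar D$ forces $D^{n}(p,q)=+\infty$ for every $n\ge 1$, so the claim holds trivially with $c=d=+\infty$ and $b=1$; hence I may assume $D(p,q)<+\infty$ throughout.

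The structural heart is a consequence of polynomial ambiguity: every strongly connected component of $G_D$ is a singleton. Suppose $p'\neq q'$ lie in the same SCC, so that $D(p',q'),D(q',p')<+\infty$; then $\bar D^{2}=\bar D$ forces additionally $D(p',p'),D(q',q')<+\infty$. Writing $D=M_w$, these four finite entries expose runs $p'\xrightarrow{w}p'$, $p'\xrightarrow{w}q'$, $q'\xrightarrow{w}p'$, $q'\xrightarrow{w}q'$ in $\cA$, i.e.\ an IDA pattern, which is incompatible with polynomial ambiguity. Hence $G_D$ is a DAG with at most a self-loop at each vertex; I call such a vertex $v$ a \emph{loop vertex} whenever $D(v,v)<+\infty$.

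With this in hand, I classify every walk of length $n$ from $p$ to $q$ by its \emph{shape} $\sigma=(p=u_0,u_1,\ldots,u_k=q)$, namely the sequence of distinct vertices it visits in topological order. The transit edges contribute a fixed cost $C_\sigma=\sum_{i=1}^{k}D(u_{i-1},u_i)$, while the remaining $n-k$ steps must be self-loops inside $\sigma$; since $D(v,v)=+\infty$ at non-loop vertices, the min-plus optimum places all $n-k$ of them on a single loop vertex of $\sigma$ realising $\mu_\sigma=\min\{D(v,v):v\in\sigma\text{ a loop vertex}\}$. So the minimum weight over walks of shape $\sigma$ and length $n$ equals $\mu_\sigma\cdot(n-k)+C_\sigma$ whenever $\sigma$ has a loop vertex and $n\ge k$; shapes without any loop vertex contribute only the constant $C_\sigma$ at $n=k$. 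Since there are finitely many shapes, for $n$ large $D^{n}(p,q)$ is the minimum of finitely many affine functions of $n$, hence is itself eventually affine --- whichever function has the smallest slope $\mu^{\ast}$, ties broken by smallest intercept, eventually wins. Choosing $b$ past the crossover threshold, one reads $D^{b+i}(p,q)=\mu^{\ast}(b+i)+d^{\ast}=c\cdot i+d$.

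I expect the main obstacle to be the singleton-SCC reduction: one has to exploit $\bar D^{2}=\bar D$ to upgrade mere bi-reachability in $G_D$ into the two self-loops at $p'$ and $q'$ required to witness IDA, and then invoke the correct characterisation of polynomial ambiguity. Once $G_D$ is known to be a DAG with self-loops, the remainder reduces to the standard tropical fact that a finite minimum of affine functions of a single integer variable is eventually affine.
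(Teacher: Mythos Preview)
Your argument is essentially the same as the paper's: both establish that $G_D$ is a DAG with self-loops via the idempotency of $\bar D$ together with the ambiguity bound, and then argue that $D^n(p,q)$ is a minimum of finitely many affine functions of $n$, hence eventually affine. Your packaging via ``shapes'' is slightly cleaner than the paper's index-set bookkeeping with $A_i$, but the content is identical.

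One point needs correction. The four runs $p'\xrightarrow{w}p'$, $p'\xrightarrow{w}q'$, $q'\xrightarrow{w}p'$, $q'\xrightarrow{w}q'$ do \emph{not} merely constitute an IDA pattern; IDA (the triple $p'\to p'$, $p'\to q'$, $q'\to q'$) only rules out finite ambiguity and is perfectly compatible with polynomial ambiguity. What you actually need --- and what your four runs do give --- is EDA: on $w^2$ there are two distinct runs $p'\to p'\to p'$ and $p'\to q'\to p'$, hence $2^n$ runs on $w^{2n}$, contradicting polynomial ambiguity (using that $\cA$ is trim). So replace ``IDA'' by ``EDA'' and invoke the correct criterion; the rest of your proof then goes through. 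You should also make explicit that ``$p',q'$ in the same SCC of $G_D$'' upgrades to $D(p',q'),D(q',p')<\infty$ precisely because $\bar D$ is idempotent (a path of any length in $G_D$ collapses to a single edge in $\bar D$).
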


\begin{proof}
We can view  $\bar{D} \in \bool^{Q \times Q}$ as the adjacency matrix of a graph.
Now we show that the cycles of the directed graph defined by
$\bar{D}$ can be only self-loops.
Indeed, assume by contradiction that there exists a cycle passing through $r,s \in Q$ with
$r \neq s$ then $\bar{D}(r, s) = \bar{D}(s, r) = \bar{D}(r,r) =
\bar{D}(s, s) = 1$ (because $\bar{D}$ is an idempotent). Since $D \in
\{M_w \mid w \in \Sigma^*\}$, it can be verified that $\cA$ cannot be
polynomially-ambiguous~\cite{weber1991degree}. Indeed, let $w$ be the word such that $D = M_w$.
Then there are at least two different paths from $r$ to $r$ when reading $w^2$. Hence when reading $w^{2n}$ the
number of paths is at least $2^n$. Since all automata considered in this paper are trimmed the state $r$ is both
accessible and co-accessible. Hence for every $n$ there is a word of length linear in $n$ with at least $2^n$ accepting runs.
This is a contradiction with the assumption that $\cA$ is polynomially-ambiguous.

We conclude that $\bar{D}$ forms an acyclic graph
with some self-loops and the states in $Q$ can be ordered as $p_1,
\ldots, p_n$, such that $\bar{D}(p_j, p_i) = +\infty$ for every $i < j$.

If $D(p, q) = + \infty$ then it suffices to take $c = d = +\infty$ since $D$ is an idempotent.
Otherwise, $D^i(p, q) = \min\left( \sum_{1 \le k \le i}D(p_{j_{k-1}},
  p_{j_k}) \right)$, where the minimum is over all sequences $(j_k)_{k=0}^i$
such that $p_{j_0} = p$, $p_{j_i} = q$. Since $D(p, q) < + \infty$ we can restrict to sequences such that $D(p_{j_{k-1}},p_{j_k}) < + \infty$ for all $k$. Let $A_i$ be the set of indices such that $k \in A_i$ if: $0 < k \le i$; $D(p_{j_{k-1}}, p_{j_k}) < + \infty$; and $p_{j_{k-1}} \neq p_{j_k}$. Since $\bar{D}$ is acyclic it follows that $|A_i| \le |Q|$. In particular this means that the number of possible sets $A_i$ depends only on $|Q|$, not on $i$.

Suppose a sequence $(j_k)_{k=0}^i$ is a witness for the value of $D^i(p, q)$. Let $k \not \in A_i$ such that $0 < k \le i$. Then $D(p_{j_k},p_{j_k}) \le D(p_{j_s},p_{j_s})$ for all $0 < s \le i$. Thus we can modify $(j_k)_{k=0}^i$ into a witness such that for all $k \not \in A_i$ such that $k > 0$ the states $p_{j_k}$ are all the same. We denote this state $p_j$.
Notice that for $i$ big enough the value of $D^i(p, q)$ depends mostly on $D(p_{j}, p_j)$. Then $p_j$ can be only one of the states $r$ such that $c = D(r, r)$ is minimal. In particular $c$ does not depend on $i$. Let $d_i = \sum_{k \in A_i} D(p_{j_{k-1}}, p_{j_k})$. Then $D^i(p, q) = d_i + c \cdot (i - |A_i|) = d_i - c\cdot |A_i| + c \cdot i$. Therefore for $i$ big enough, denoted $i \ge i_0$, this sum is achieved for $A_i$ such that $d_i - c \cdot |A_i|$ is minimal. Since the number of different $A_i$ is bounded we can assume that $A_i$ and $d_i$ do not depend on $i$, for $i \ge i_0$, and denote them $A$ and $d'$, respectively.

The lemma follows by fixing $c$ as above, $b = |A| + i_0$, and $d = d' + c \cdot i_0$.
\end{proof}

\begin{proof}[Proof of Theorem~\ref{theorem:polynomial}]
Consider a polynomially-ambiguous WA $\cA = (Q, \Sigma, \{M_a\}_{a \in \Sigma}, I, F)$ over $\trop$ such that $f = \sem{\cA}$.
We take for $N$ the constant from Lemma~\ref{lemma:ramsey} for the finite monoid $\bool^{Q \times Q}$.
The function $\varphi: \nat \rightarrow \nat$ will be determined later in the proof.

Consider an $n$-pumping representation $w$ like in the statement of the theorem.
By Lemma~\ref{lemma:ramsey}, for every $v_k$ there exists a factorisation $v_k = x_k y_k z_k$ such that $M_{y_k}$ is an idempotent and $|y_k| \le N$. We denote $D_k = M_{y_k}$ and define:
\[
w \; = \;  u_0' \cdot \underline{y_1} \cdot u_1' \cdot \underline{y_2} \cdot \ldots u_{n-1}' \cdot \underline{y_n} \cdot u_n'
\]
such that each word $y_k$ is the factor of $v_k$ corresponding to the idempotent $D_k$. In the remainder of the proof we denote $w_{\le k} = u_0' \cdot y_1 \cdot \ldots u_{k-1}'$. For every $S \subseteq \{1 \ldots n\}$ we denote by $w_{\le k}(S,i)$ the word $w_{\le k}$ with all $y_j$ pumped $i$ times for all $j < k$ such that $j \in S$.

Recall that $\Run_\cA(w)$ is the set of all accepting runs on $w$, and let $\rho \in \Run_\cA(w)$. Every run induces two states for each $1 \le k \le n$: states preceding and following each word $y_k$. In the rest of the proof these will be the most important parts of a run.
To work with them, we define the abstraction of $\rho$, denoted by
$\bar{\rho}: \{1, \ldots, n\} \to Q \times Q$, such that $\bar{\rho}(k)
= (p, q)$ where $p$ and $q$ are the states of $\rho$ reached after
$w_{\leq k}$ and $w_{\leq k}\cdot y_k$, respectively.
Similarly, for $S \subseteq \{1, \ldots, n\}$, $i \geq 1$, and $\rho \in
\Run_\cA(w(S,i))$ we define $\bar{\rho}: \{1, \ldots, n\} \to Q \times
Q$ such that $\bar{\rho}(k) = (p, q)$ where $p$ and $q$ are the states
of $\rho$ reached after $w_{\le k}(S,i)$ and $w_{\le k}(S,i) \cdot
y_k(S, i)$, respectively.
We denote by $\overline{\Run_\cA}(w)$ the set of all $\bar{\rho}$ with
$\rho \in \Run_\cA(w)$, and same for $\overline{\Run_\cA}(w(S,i))$.
Observe that since all $D_k$ are idempotents, $\overline{\Run_\cA}(w(S,i)) = \overline{\Run_\cA}(w)$ for all subsets $S$ and $i \ge 1$.

The next step is to prove that the cardinality of
$\overline{\Run_\cA}(w)$ is bounded by a polynomial
$P(\cdot)$ depending only on $\cA$, namely such that
$|\overline{\Run_\cA}(w)| \leq P(n)$.
Let $w'$ be the word obtained from $w$ where each $u_i'$ is replaced with a word $u_i''$ of length at most $|\bool^{Q \times Q}|$ such that $\overline{M_{u_i'}} = \overline{M_{u_i''}}$. To see that $u_i''$ exists suppose that $u_i' = a_1 \ldots a_s$ and $s > |\bool^{Q \times Q}|$. By the pigeonhole principle in the sequence of matrices $\overline{M_{a_1}}, \overline{M_{a_1a_2}}, \ldots, \overline{M_{a_1\ldots a_s}}$ there is a repetition that can be removed giving a shorter $u_i''$ such that$\overline{M_{u_i'}} = \overline{M_{u_i''}}$.
Then $|\Run_\cA(w')| \ge |\overline{\Run_\cA}(w)|$.
Recall that $|y_i| \le N$ and that $N$ depends only on $\bool^{Q \times Q}$.
Then by definition $|w'| \le (N + |\bool^{Q \times Q}|)\cdot(n+1)$ and thus $|\Run_\cA(w')| \le R((N + |\bool^{Q \times Q}|)\cdot(n+1))$, where~$R$ is the polynomial bounding the number of runs in $\cA$. Recall that $\cA$ is polynomially-ambiguous. The claim follows for $P(n) = R((N + |\bool^{Q \times Q}|)\cdot(n+1))$.

Fix a non-empty set $S \subseteq \{1, \ldots, n\}$ and a run $\rho \in \Run_{\cA}(w)$.
For every $k \in S$ let $b_{\bar{\rho}}^k$, $c_{\bar{\rho}}^k$
and $d_{\bar{\rho}}^k$ be the constants from
Lemma~\ref{lemma:polymatrix} such that $D_k^{b_{\bar{\rho}}^k +
  i}[\bar{\rho}(k)] = c_{\bar{\rho}}^k \cdot i +
d_{\bar{\rho}}^k$ for $i \ge 0$. Since $\rho$ is accepting, $c_{\bar{\rho}}^k, d_{\bar{\rho}}^k < + \infty$.
We show that:
\begin{enumerate}
 \item\label{eq:equal} $\sem{\cA}(w(S, i)) = \sem{\cA}(w(S, i+1))$ for
   all sufficiently large $i$ iff there exists a run $\rho \in \Run_{\cA}(w)$ such that $c_{\bar{\rho}}^k = 0$ for every $k \in S$;
 \item\label{eq:nequal} $\sem{\cA}(w(S, i)) < \sem{\cA}(w(S, i+1))$
   for all sufficiently large $i$ iff for every run $\rho \in \Run_{\cA}(w)$ there exists $k$ such that $c_{\bar{\rho}}^k > 0$.
\end{enumerate}

Since the number of different $b_{\bar{\rho}}^k$ is bounded we can assume that they are all equal to some $i_0$ by choosing the maximal $b_{\bar{\rho}}^k$.
Let $\rho \in \Run_\cA(w(S,i+1))$ be a run realising the minimum value for $i \ge i_0$.
Given that $D_k$ is idempotent one can obtain a run $\rho' \in
\Run_{\cA}(w(S,i))$ such that $\bar{\rho}' = \bar{\rho}$ by removing
one copy of each $y_k$. In particular
$|\rho'| \le |\rho|$, which proves $\sem{\cA}(w(S, i)) \le
\sem{\cA}(w(S, i+1))$. It follows that it suffices to
show~\eqref{eq:equal} above.

To prove~\eqref{eq:equal} suppose first that $\sem{\cA}(w(S, i)) =
\sem{\cA}(w(S, i+1))$ for all sufficiently large $i$. Let $\rho \in
\cA(w(S, i+1))$ and $\rho' \in \cA(w(S, i))$ be the previous runs
realising the minimum on $w(S,i+1)$ and its shortening, respectively.
By Lemma~\ref{lemma:polymatrix} $D_k^{i_0 + i+1}[\bar{\rho}(k)] =  c_{\bar{\rho}}^k \cdot (i+1) + d_{\bar{\rho}}^k$. If $c_{\bar{\rho}}^k > 0$ for some $k$ then the inequality $\sem{\cA}(w(S, i_0 + i)) \le \sem{\cA}(w(S, i_0 + i+1))$ would be sharp, which is a contradiction.
For the other direction suppose there exists a run $\rho \in
\Run_{\cA}(w)$ such that $c_{\bar{\rho}}^k = 0$ for every $k \in
S$. Then for every $i \ge 0$ there exists a run $\rho_i \in
\Run_{\cA}(w(S, i_0 + i))$ such that $|\rho_i| \le |\rho| +
\sum_{k}d_{\bar{\rho}}^k$. Since $\sem{\cA}(w(S, i_0 + i)) \le
\sem{\cA}(w(S, i_0 + i+1)) \le |\rho| + \sum_{k}d_{\bar{\rho}}^k$
it follows that $\sem{\cA}(w(S, i_0 + i)) = \sem{\cA}(w(S, i_0 +
i+1))$ for all sufficiently large $i$.

Given the previous discussion, let $\bar{R}_k = \{\bar{\rho} \in \overline{\Run}_\cA(w) \mid c_{\bar{\rho}}^k > 0\}$ for every $k \in \{1, \ldots, n\}$.
The set $\bar{R}_k$ represents the abstractions of the runs over $w$ that will grow when pumping $w(\{k\}, i)$.
Then, we can restate~\eqref{eq:nequal} as: $\sem{\cA}(w(S, i)) <
\sem{\cA}(w(S, i+1))$ for all sufficiently large $i$ iff $\bigcup_{k \in S} \bar{R}_k = \overline{\Run}_\cA(w)$.

We are now ready to prove the theorem.
Fix a partition $S_1, \ldots, S_m$ of $\{1,\ldots,n\}$ for some $m \geq \varphi(\max
|S_l|)$ ($\varphi$ will be defined below).
Suppose the first condition is not true, namely, for all $1 \leq j
\leq m$ there
exist arbitrarily big values $i$ such that $f(w(S_j, i)) \neq
f(w(S_j,i+1))$. From~\eqref{eq:nequal} it follows that $f(w(S_j, i))
< f(w(S_j,i+1))$ for all sufficiently large $i$ and $\bigcup_{k \in S_j} \bar{R}_k = \overline{\Run}_\cA(w)$ for every $j \leq m$.
Let $L = \max |S_l|$. We assume that $L > 1$, otherwise every
selection $S$ contains a whole set $S_k$ for some $k$ and we are
done.

To construct the selection set $S = \{k_1, \ldots, k_m\}$ we define by induction the sets $G_j$.
For every $j \in \{1, \ldots,
m\}$ let $G_j = \overline{\Run}_\cA(w) \setminus \bigcup_{l \leq j}
\bar{R}_{k_l}$ (where $k_0$ is undefined, so $G_0 =
\overline{\Run}_\cA(w)$). Intuitively, $G_j$ correspond to runs that are not covered by the set $\{k_1, \ldots, k_j\}$.
For the inductive case, suppose that $j \geq 0$ and $G_j \neq \emptyset$.
Since $\bigcup_{k \in S_{j+1}} \bar{R}_k = \overline{\Run}_\cA(w)$,
by the pigeonhole principle there exist $k_{j+1} \in S_{j+1}$ such that $|\bar{R}_{k_{j+1}} \cap G_j| \geq |G_j|/|S_{j+1}|$. We add $k_{j+1}$ to $S$ and so
$|G_{j+1}| \leq |G_j| - |G_j|/|S_{j+1}| = |G_j|\cdot (|S_{j+1}| -1 )/|S_{j+1}| \leq |G_j|\cdot (L -1)/L$.
Suppose this procedure continues until $j = m$ and $G_{m} \neq \emptyset$. Then $1 \le |\overline{\Run}_\cA(w)|\cdot ((L-1)/L)^m$, and $|\overline{\Run}_\cA(w)| \ge (L/(L-1))^m$. However, we know that $|\overline{\Run}_\cA(w)|$ is bounded by a polynomial function $P(n)$ depending on $|\cA|$. Thus, it suffices to choose  $\varphi$ such that $m \geq \varphi(L)$ implies $(L/(L-1))^{m} > P(L\cdot m) \geq P(n) \ge |\overline{\Run}_\cA(w))|$ (recall that $S_1, \ldots, S_m$ is a partition of $\{1, \ldots, n\}$ and $L \cdot m \geq n$). Therefore, $G_m = \emptyset$ and thus $\bigcup_{k \in S} \bar{R}_k = \overline{\Run}_\cA(w)$, which concludes the proof.
\end{proof}

\section{Pumping lemmas for the max-plus semiring}
\label{sec:max}
In this section, we consider finitely ambiguous and polynomially ambiguous weighted automata over the $\arctic$ semiring.
Notice that U-WA over $\arctic$ is the same class of functions as U-WA over $\trop$ and thus Theorem~\ref{theorem:pumping_refined} also holds for this class.
For this reason, here we focus on the ambiguous cases, dividing the section into two parts to deal separately with the finitely ambiguous and polynomially ambiguous cases.

\subsection{Pumping Lemma for Finitely Ambiguous Weighted Automata over $\arctic$}

We use the definitions of \textit{refinements} and \textit{n-pumping
  representations} from Section~\ref{sec:pump-sum} and
Section~\ref{sec:fin_min}. In order to formulate the pumping lemma for
finitely-ambiguous functions over the $\mathbb{N}_{\max, +}$ semiring,
we define a few more notations.

Fix a function $f : \Sigma^* \to \mathbb{N}$ and a word $w \in
\Sigma^*$. Suppose that we have an $n$-pumping representation for $w$,
and a refinement thereof. Let $\{1,\ldots,n\}$ be the set of all
indices in the refinement. We say that a refinement is
\emph{\proper} if for every subset $S \subseteq \{1,\ldots,n\}$, there exists $K$
such that $f(w(S, i + 1)) = K + f(w(S, i))$ for all sufficiently
large $i$.
For \proper \ refinements we let $\Delta(S)$ denote the above value
$K$ (note that $\Delta$ depends on $f$ and $w$, which are
fixed). Furthermore, we say that $S \subseteq \{1,\ldots,n\}$ is \textit{\compatible} if
\[
\Delta(S) = \sum_{j \in S} \Delta(\{j\}).
\]

\begin{thm}\label{thm:fin_max_plus}
  Let $f : \Sigma^* \to \mathbb{N}$ be a finitely ambiguous function
  over the  semiring $\mathbb{N}_{\max, +}$. There exists $N \in
  \mathbb{N}$ such that for every $n$-pumping representation
  \[
  w = u_0 \cdot \underline{v_1} \cdot u_1 \cdot \underline{v_2} \cdot \ldots \underline{v_{n}} \cdot u_{n},
  \]
  where $n \ge N$ and and $|v_i| \ge N$ for all $i$, there exists a \proper \ refinement
  \[
  w = x_0 \cdot \underline{y_1} \cdot x_1 \cdot \underline{y_2} \cdot \ldots \underline{y_{n}} \cdot x_{n}
  \]
  such that for every sequence of pairwise different, non-empty sets
  $S_1, S_2, \ldots S_{k} \subseteq \{1,\ldots,n\}$ with $k \ge N$,
  one of the following holds:
  \begin{itemize}
  \item there exists $j$ such that $S_j$ is not \compatible;
  \item there exist $j_1$ and $j_2$ such that $\{l_1, l_2\}$ is decomposable for every $l_1 \in S_{j_1}$ and $l_2 \in S_{j_2}$.
  \end{itemize}
\end{thm}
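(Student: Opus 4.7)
The plan is to reduce to the unambiguous case via Weber's theorem~\cite{Weber94}: every finitely ambiguous max-plus WA is a finite maximum of unambiguous max-plus WAs. So I would write $f = \max\{f_1, \ldots, f_m\}$ with each $f_j$ computed by an unambiguous $\arctic$-WA $\cA_j = (Q_j, \Sigma, \{M_{j,a}\}, I_j, F_j)$. As in the proof of Theorem~\ref{theorem:some_min}, I assemble these into a joint block-diagonal family $\{U_a\}_{a\in\Sigma}$ over $Q = \bigcup_j Q_j$ (pairwise disjoint) and take $N = \max\{K, m+1\}$, where $K$ is the Ramsey constant of Lemma~\ref{lemma:ramsey} for $\bool^{Q\times Q}$. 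Given an $n$-pumping representation as in the statement, I apply Lemma~\ref{lemma:ramsey} to each $v_i$ with respect to the joint abstraction to obtain refinement factors $y_i$ whose abstractions are idempotent, and declare this to be the desired refinement.

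The heart of the proof is to show that this refinement is \proper\ with an additive slope function. Fix $j$ and a non-empty $S \subseteq \{1, \ldots, n\}$. Since idempotency of each $U_{y_k}$ preserves the $\bool^{Q\times Q}$-abstraction of $w(S,i)$ uniformly in $i$, either $\cA_j$ has no accepting run on $w(S,i)$ for any $i \ge 1$, or it has a unique accepting run for every $i \ge 1$. In the latter case, combining unambiguity with idempotency forces the entry/exit states at the boundaries of each pumped block $y_k^i$, as well as the cyclic state threaded through its interior copies, to be independent of $i$ and of the pumping performed at the other indices (this is essentially the unambiguous specialization of Lemma~\ref{lemma:polymatrix}). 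Hence each additional copy of $y_k$ contributes a fixed cycle weight $c_j(\{k\}) \in \nat$, and $f_j(w(S,i)) = c_j(S)\cdot i + d_j(S)$ for $i$ large, with additivity $c_j(S) = \sum_{k \in S} c_j(\{k\})$. Taking the max over $j$, the function $f(w(S,i))$ is eventually affine: for $i$ large enough the max is achieved by a single index $j^\ast(S)$ with largest slope, so $f(w(S,i+1)) - f(w(S,i)) = c_{j^\ast(S)}(S) =: \Delta(S)$, proving that the refinement is \proper.

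From the additivity we get $\Delta(S) = \max_j \sum_{k\in S} c_j(\{k\}) \le \sum_{k \in S} \max_j c_j(\{k\}) = \sum_{k \in S} \Delta(\{k\})$, with equality (i.e., $S$ is \compatible) iff there is a single $j$ simultaneously maximizing $c_j(\{k\})$ over all $k \in S$. Defining the ``witness sets'' $T_j = \{k : c_j(\{k\}) = \Delta(\{k\})\}$, the set $S$ is \compatible\ iff $S \subseteq T_j$ for some $j$. To conclude, given $k \ge N \ge m+1$ pairwise different nonempty sets $S_1, \ldots, S_k$, either some $S_l$ fails to be \compatible\ and we are in case (i), or each $S_l$ is contained in some $T_{j(l)}$. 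By pigeonhole on the $m$ possible witnesses, two indices $l_1 \ne l_2$ satisfy $S_{l_1}, S_{l_2} \subseteq T_{j^\ast}$; then for any $l_1' \in S_{l_1}$ and $l_2' \in S_{l_2}$ the pair $\{l_1', l_2'\} \subseteq T_{j^\ast}$ is \compatible\ by the characterization above, giving case (ii). The main obstacle I anticipate is the independence argument in the affinity step: arguing rigorously that, despite the global acceptance constraint, unambiguity together with idempotency makes the cycle weights accumulated at distinct pumped blocks $y_k$ truly independent, so that slopes add across $S$.
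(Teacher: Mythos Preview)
Your proposal is correct and arrives at the same endgame as the paper: an eventually-affine slope $\Delta(S)$, the characterisation ``$S$ is \compatible\ iff all indices of $S$ share a common dominant witness'', and a pigeonhole over the $m$ witnesses to force two of the $S_l$ into the same witness set. The difference lies in how the affine structure is obtained. The paper works directly with the finitely ambiguous automaton $\cA$: after refining so that each $\bar{M}_{y_j}$ is idempotent, it analyses the runs of $\cA$ on $w(S,i)$ and uses the EDA/IDA criteria of Weber--Seidl to show that (for large $i$) every accepting run has a single ``maximal cycle'' in each block, that the number of runs stabilises, and hence that the slope at block $k$ along run $\rho$ is the cycle weight $\wt(\rho[k])$. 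The paper's Claim then reads exactly as your characterisation, with the $m$ stabilised runs playing the role of your unambiguous components $f_1,\ldots,f_m$. Your route through Weber's decomposition is a clean alternative: it replaces the EDA/IDA reasoning by the structural theorem and reduces the per-block analysis to the unambiguous case, at the cost of importing that theorem. Either approach yields the same slope formula $\Delta(S)=\max_j\sum_{k\in S} c_j(\{k\})$ and the same pigeonhole conclusion.

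The obstacle you flag is not a real obstacle. In a trim unambiguous automaton there is at most one run between any pair of states over any word, so once the joint abstraction fixes the entry/exit pair $(p_k,q_k)$ at each block (independent of $S$ and $i$, by idempotency), the weight contributed by block $k$ depends only on $(D_k,p_k,q_k,i)$; in particular it is independent of the other blocks, and additivity of slopes is immediate. For the eventual affinity of $D_k^i(p_k,q_k)$ you can simply invoke Lemma~\ref{lemma:polymatrix} (which the paper notes holds verbatim over $\arctic$), rather than arguing about a ``cyclic state''.
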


Before proving the theorem we show how to use the pumping lemma on two
examples.

\begin{exa}\label{ex:max-a_to_b}
  Consider the function $g_3$ which computes
  \[
  \max_{0 \leq i \leq n} {|a_1 \ldots a_i|_a + |a_{i+1} \ldots
    a_n|_b},
  \]
  for any $w=a_1 \dots a_n  \in \{a , b\}^*$. This is defined by a small modification of $\mathcal{W}_3$ in
  Figure~\ref{fig:WA}. The weights of the transitions changing the states is modified to $1$ and the semiring is changed from
  $\trop$ to $\arctic$. We show that this function cannot be expressed
  by any finitely ambiguous WA over $\arctic$. Towards a contradiction
  fix $N$ from Theorem~\ref{thm:fin_max_plus} and consider the
  $(2N+2)$-pumping representation
  $(\underline{a^{N+1}}\,\underline{b^{N+1}})^{N+1}$. In the
  refinement, we index the $j$-th block of $a$'s with $j$ and the
  $j$-th block of $b$'s with $j'$. Let $x_1,\ldots,x_{N+1}$ and
  $y_1,\ldots,y_{N+1}$ be the lengths of all blocks of $a$'s and $b$'s
  in the refinement.
  We define the sets $S_j = \{j, j'\}$ for all $1 \leq j \leq N + 1$. We show that none of the conditions of the pumping lemma hold. First, 
  it is easy to see that all sets $S_j$ are \compatible. Indeed $\Delta(S_j) = x_j + y_j = \Delta(\{j\}) + \Delta(\{j'\})$.
  For the second condition we can assume that $j_2 > j_1$. Since the
  function counts $a$'s before $b$'s, the set $\{j_1', j_2\}$ is not
  decomposable because $\Delta(\{j_1',j_2\}) =
  \max(y_{j_1},x_{j_2})$. Thus, any $S_{j_1}$ and $S_{j_2}$ will not
  satisfy the second condition either.
\end{exa}

\begin{exa}
  Consider the function $g_4$ which computes the length of longest block of $b$'s. This is defined by $\mathcal{W}_4$ in Figure~\ref{fig:WA} if we change the semiring of the automaton from $\trop$ to $\arctic$. We shall show that $g_4$ cannot be expressed by a FA-WA over $\arctic$. Towards a contradiction let $N$ be the constant from Theorem~\ref{thm:fin_max_plus}. Consider the $(N+1)$-pumping representation $(\underline{b^{N+1}}a)^{N+1}$. We define the sets $S_j = \{j\}$ for all $1 \leq j \leq N + 1$.
  First, each set is also \compatible \ for trivial reasons. Second, every index is not decomposable with any other since the function takes into account the value of at most one block of $b$'s.
 \end{exa}

\begin{proof}[Proof of Theorem~\ref{thm:fin_max_plus}]
  Let $\mathcal{A}$ be the finitely ambiguous WA that
  computes $f$. Suppose that $\mathcal{A}$ has ambiguity at most $m$.
  Let $N = \max\{K, m+1\}$ such that $K$ is the constant from Lemma~\ref{lemma:ramsey} applied to $\bool^{Q \times Q}$.
  Now, consider a word $w$ and an $n$-pumping representation of $w$
  according to the theorem. Since each $v_j$ in the representation has
  length more than $K$ we can refine every $v_j$ to $y_j$ such that every $\bar{M}_{y_j}$ is an idempotent.

  We prove that this refinement is \proper. Fix $S \subseteq \{1,\ldots,n\}$. We prove that $f(w(S, i + 1)) = K' + f(w(S, i))$ such that $K'$ does not depend on $i$ for all $i$ big enough. Consider $i > |Q|$.
  Let $\rho$ be an accepting run over $w(S, i)$. For every $j \in S$ let $p_{j,0}$ be the state in the run $\rho$ before reading $y_j$ and $p_{j,1}, \ldots p_{j,i}$ the states after reading $y_j$ the respective amount of times. Notice that by definition $\bar{M}_{y_j}(p_{j,s},p_{j,s+1}) = 1$ for all $s \in \{0,\ldots,i-1\}$. We will need some observations that follow from the assumption that $\cA$ is finitely-ambiguous.

  First, let $0 \le s < s' \le i$ such that $p_{j,s} = p_{j,s'}$. We prove that $p_{j,s} = p_{j,s''}$ for all $s \le s'' \le s'$. Since $\rho$ is an accepting run and $\bar{M}_{y_j}$ is idempotent $\bar{M}_{y_j}(p_{j,s},p_{j,s''}) = \bar{M}_{y_j}(p_{j,s''},p_{j,s}) = 1$ for all $s \le s'' \le s'$. Therefore, if there exists $s''$ such that $p_{j,s} \neq p_{j,s''}$ then there would be two different accepting runs from $p_{j,s}$ to $p_{j,s}$ when reading $(y_j)^2$ (with $p_{j,s}$ or $p_{j,s''}$ in the middle). This contradicts the EDA criterion in~\cite{weber1991degree} (by pumping $(y_j)^2$ one could generate arbitrary many accepting runs).

  To state the second property we introduce some notation. We say that $(s,s')$ is a maximal cycle in $p_{j,0},\ldots,p_{j,i}$ if $0 \le s < s' \le i$, $p_{j,s} = p_{j,s'}$ and $p_{j,s-1} \neq p_{j,s}$, $p_{j,s'} \neq p_{j,s'+1}$ (the last condition is required for $s > 0$ and $s' < i$). We prove that there is at most one maximal cycle in every $p_{j,0},\ldots,p_{j,i}$. Suppose otherwise that there are $0 \le s < s' < t < t' \le i$ such that $(s,s')$ and $(t,t')$ are maximal cycles. Since $\rho$ is an accepting run and $\bar{M}_{y_j}$ is idempotent $\bar{M}_{y_j}(p_{j,s},p_{j,s}) = \bar{M}_{y_j}(p_{j,s},p_{j,t}) = \bar{M}_{y_j}(p_{j,t},p_{j,t}) = 1$. This contradicts the IDA criterion in~\cite{weber1991degree}, i.e.\ for every $k$ when reading when reading $(y_j)^k$ there would be at least $k$ runs from $s$ to $t$.

  We conclude that every run $\rho$ over $w(S, i)$ has a unique maximal cycle in every block $j \in \{1,\ldots,n\}$. Notice that the number of accepting runs over $w(S, i+1)$ can only increase compared to the number of runs over $w(S, i)$ since every accepting run $\rho$ over $w(S, i)$ can be prolonged by increasing the maximal cycle. Recall that $\cA$ is finitely-ambiguous. Therefore, for $i$ big enough the number of accepting runs over $w(S, i)$ and $w(S, i+1)$ will be always the same. We denote this number by $m$.

%

   Let $\rho_1,\ldots,\rho_m$ be all accepting runs over $w(S,i)$. Let $\wt[\rho_s]$ denote the weight of the run $\rho_s$ for $1\le s \le m$, i.e.\ $f(w(S,i)) = \max(\wt(\rho_1),\ldots,\wt(\rho_m))$. For every $j \in \{1,\ldots,n\}$ and $l \in \{1,\ldots,m\}$ we denote by $\wt(\rho_l[j])$ the weight ${M}_{y_j}(p,p)$, where $p$ is the state defining the corresponding maximal cycle.
   Note that by construction, each $\rho_l[j]$ is a cycle.
   Then
 \begin{align*}
     f(w(S,i+1)) = \max_{l \in \{1,\ldots,m\}} \wt(\rho_l) + \sum_{i \in S} \wt{\rho_l[i]}.
 \end{align*}
   Therefore this refinement is \proper, where $\Delta(S) = \max_{l \in \{1,\ldots,m\}} \sum_{i \in S} \wt{\rho_l[i]}$.

   We say that a cycle
   $\rho_l[j]$ is dominant if $\wt(\rho_{l'}[j]) \leq
   \wt(\rho_{l}[j])$ for all $l'$.
  The rest of the proof involves reasoning about the dominant
  cycles. First we make a simple observation. Suppose $\rho_l[j]$ is
  dominant, then $\wt(\rho_l[j]) = \Delta(\{j\})$. We make one more observation.

  \begin{clm}\label{clm}
    Assume that we have a \proper\ refinement and $S \subseteq
    \{1,\ldots,n\}$ is a subset of indices of the refinement. Then $S$
    is \compatible\ if and only if there exists some run $\rho$ such that for all $j \in S$, $\rho[j]$ is dominant.
  \end{clm}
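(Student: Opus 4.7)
The plan is to prove both directions directly from the formula $\Delta(S)=\max_{l\in\{1,\ldots,m\}}\sum_{j\in S}\wt(\rho_l[j])$ derived just above, combined with the observation that $\Delta(\{j\})=\max_{l}\wt(\rho_l[j])$, noting that a cycle $\rho_l[j]$ is dominant precisely when $\wt(\rho_l[j])=\Delta(\{j\})$.

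For the easier ($\Leftarrow$) direction, I would assume some $\rho_l$ has $\rho_l[j]$ dominant for all $j \in S$. Then $\wt(\rho_l[j]) = \Delta(\{j\})$ for all such $j$, hence $\sum_{j\in S}\wt(\rho_l[j])=\sum_{j\in S}\Delta(\{j\})$. Combined with the trivial inequality $\Delta(S)=\max_{l'}\sum_{j\in S}\wt(\rho_{l'}[j])\le \sum_{j\in S}\max_{l'}\wt(\rho_{l'}[j])=\sum_{j\in S}\Delta(\{j\})$ (max of a sum is bounded by the sum of maxes), this forces equality, so $S$ is \compatible.

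For the ($\Rightarrow$) direction, I would assume $\Delta(S)=\sum_{j\in S}\Delta(\{j\})$. Pick $l^*$ achieving the maximum in $\Delta(S)$, so that $\sum_{j\in S}\wt(\rho_{l^*}[j])=\sum_{j\in S}\Delta(\{j\})$. Since each term satisfies $\wt(\rho_{l^*}[j])\le \Delta(\{j\})$, equality of the sums forces equality termwise, i.e.\ $\wt(\rho_{l^*}[j])=\Delta(\{j\})$ for all $j\in S$. Hence $\rho_{l^*}[j]$ is dominant for every $j\in S$, as required.

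No serious obstacle is expected: the whole argument is a one-line use of the identity ``$\max$ of sum equals sum of maxes iff the individual maxima are witnessed simultaneously.'' The only subtle point is making sure we are working at a value of $i$ large enough so that the list of accepting runs $\rho_1,\ldots,\rho_m$ is stable and the formula for $\Delta$ in terms of dominant cycles is valid; this has already been arranged in the surrounding argument, so we may freely quote it.
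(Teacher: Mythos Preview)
Your proposal is correct and takes essentially the same approach as the paper: both directions hinge on the identity $\Delta(S)=\max_{l}\sum_{j\in S}\wt(\rho_l[j])$ together with $\Delta(\{j\})=\max_l \wt(\rho_l[j])$, and the argument that a sum of term-wise maxima is attained iff some single index realises all of them simultaneously. The only cosmetic difference is that the paper phrases the $(\Rightarrow)$ direction by contradiction, whereas you argue it directly via term-wise comparison; the content is the same.
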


  \begin{claimproof}
    Assume first that $S$ is \compatible, and consider some run $\rho$ such that $\sum_{j \in
      S}\wt(\rho[j])$ is maximal among all runs on $w$. We claim that
     $\rho[j]$ is dominant for all $j \in S$. Assume this is not the
     case. Notice that the value computed by the automaton increases
     by $\sum_{j \in S}\wt(\rho[j])$ when $S$ is pumped in $\rho$. By
     the choice of $\rho$ and the fact that the refinement is \proper,
     we have that $\Delta(S) =  \sum_{j \in S}\wt(\rho[j])$.  By
     assumption we know that there is some $j^* \in S$ such that
     $\rho$ is not dominant for $j^*$, so  $\wt(\rho[j^*]) <
     \Delta(\{j^*\})$. But this means that $\Delta(S) = \sum_{j \in
       S}\wt(\rho[j]) < \sum_{j \in S}{\Delta(\{j\})}$, which is a
     contradiction to $S$ being \compatible.

     For the reverse implication, consider a run $\rho$ such that $\rho[j]$ is dominant for all $j \in S$.
     In particular, $\sum_{j \in S}\wt(\rho[j]) \geq
     \sum_{j\in S} \wt(\rho'[j])$ for any other run $\rho'$. This
     means that when the set $S$ is pumped, the value computed by the
      automaton increases by $\sum_{j \in S}\wt(\rho[j])$, which also happens to be $\sum_{j \in S}\Delta(\{j\})$ since the cycles in consideration are dominant.
   \end{claimproof}

  To conclude we show that if the first condition of the pumping lemma does not hold then the second condition must hold. Indeed, suppose that all sets are \compatible. Then by Claim~\ref{clm} for every set $S_j$ there is some run $\rho_{l_j}$ in which all the cycles corresponding to $S_j$ are dominant. But since there are more sets than runs, there must be some $j_1 \neq j_2$ such that $l_{j_1} = l_{j_2}$, namely, two sets which have the same corresponding runs. However, by Claim~\ref{clm} this means that $\{k_1, k_2\}$ is decomposable for every $k_1 \in S_{j_1}$ and $k_2 \in S_{j_2}$.
\end{proof}

\subsection{Pumping Lemma for Polynomially Ambiguous Weighted Automata over $\arctic$}

In this section, we will re-use the definition of linear refinement
and decomposability from the previous section. We will also re-use the definition of  selection set from Section~\ref{sec:pumping}.

\begin{thm}\label{thm:poly_max_plus}
Let $f : \Sigma^* \to \mathbb{N}$ be a polynomial-ambiguously recognisable function over $\mathbb{N}_{\max, +}$. There exist $N$ and a function $\varphi : \mathbb{N} \to \mathbb{N}$ such that for all $n$-pumping representations
$$
w = u_0 \cdot \underline{v_1} \cdot u_1 \cdot \underline{v_2} \cdot \ldots u_{n-1} \cdot \underline{v_n} \cdot u_n,
$$
where $|v_i| \geq N$ for every $1 \le i \leq n$, there exists a \proper \ refinement
$$
w = u_0' \cdot \underline{y_1} \cdot u_1' \cdot \underline{y_2} \cdots u_{n-1}' \cdot \underline{y_n} \cdot u_n',
$$
such that for every partition $\pi = S_1, S_2, \ldots S_m$ of $\{1, \ldots , n\}$ with $m \ge \varphi(\max_j(|S_j|))$  one of the following holds:
\begin{itemize}
\item there exists $j$ such that $S_j$ is decomposable;
\item there exists a selection set $S$ for $\pi$ such that $S$ is not decomposable.
\end{itemize}
\end{thm}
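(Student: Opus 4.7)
The plan is to mirror the structure of the proof of Theorem~\ref{theorem:polynomial}, adapting each step from the min-plus to the max-plus semiring. First, I would apply Lemma~\ref{lemma:ramsey} to the finite monoid $\bool^{Q \times Q}$ to refine each $v_i$ in the $n$-pumping representation by extracting an idempotent factor $y_i$ with $|y_i| \le N$, such that $M_{y_i}$ is idempotent in the sense of Section~\ref{subsec:monoids}. This produces a canonical refinement that depends only on the transition monoid and is independent of $I$ and $F$.

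Next I would establish a max-plus analogue of Lemma~\ref{lemma:polymatrix}: for any idempotent $D$ of the form $M_w$ in a polynomially ambiguous weighted automaton over $\arctic$ and any $p,q \in Q$, there exist $b \in \nat$ and $c, d \in \arctic$ with $D^{b+i}(p,q) = c \cdot i + d$ for all $i \ge 0$. The proof is essentially the same as in the min-plus case: polynomial ambiguity together with the Weber--Seidl EDA criterion forces $\bar D$ to be acyclic apart from self-loops, the states of $Q$ admit a linear order compatible with $\bar D$, and the max over paths through this skeleton is, for $i$ large enough, attained by concentrating the self-loop iterations at the state with the \emph{maximum} self-loop weight (in place of the minimum used in the min-plus argument). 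From this, exactly as in the proof of Theorem~\ref{theorem:polynomial}, I would derive that the refinement is \proper\ and that
\[
\Delta(S) \;=\; \max_{\bar\rho \in \overline{\Run}_\cA(w)}\, \sum_{k \in S} c_{\bar\rho}^k,
\]
where $c_{\bar\rho}^k$ is the linear coefficient produced by the analogue of Lemma~\ref{lemma:polymatrix} applied to $D_k = M_{y_k}$ along $\bar\rho(k)$. The polynomial bound $|\overline{\Run}_\cA(w)| \le P(n)$ transfers verbatim.

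The key observation for characterising decomposability is the max-over-sum inequality $\Delta(S) \le \sum_{k \in S} \Delta(\{k\})$, with equality iff there is a single abstract run attaining the individual maximum $\Delta(\{k\}) = \max_{\bar\rho} c_{\bar\rho}^k$ at every $k \in S$ simultaneously. Setting $R_k = \{\bar\rho \in \overline{\Run}_\cA(w) \mid c_{\bar\rho}^k = \Delta(\{k\})\}$ and $\bar R_k = \overline{\Run}_\cA(w) \setminus R_k$, the set $S$ is decomposable iff $\bigcap_{k \in S} R_k \neq \emptyset$, equivalently $\bigcup_{k \in S} \bar R_k \neq \overline{\Run}_\cA(w)$.

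Finally, suppose no $S_j$ is decomposable, so $\bigcup_{k \in S_j} \bar R_k = \overline{\Run}_\cA(w)$ for every $j$. I would then reuse the greedy pigeonhole argument from the end of the proof of Theorem~\ref{theorem:polynomial}: iteratively pick $k_j \in S_j$ maximising the fraction of still-uncovered runs lying in $\bar R_{k_j}$, so that the uncovered set shrinks by a factor of at least $(L-1)/L$ where $L = \max_j |S_j|$. Taking $\varphi(L)$ so that $(L/(L-1))^{\varphi(L)} > P(L \cdot \varphi(L))$ guarantees that after $m \ge \varphi(L)$ rounds the uncovered set is empty, yielding a selection set $S = \{k_1,\ldots,k_m\}$ with $\bigcap_{k \in S} R_k = \emptyset$, hence not decomposable. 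The main obstacle is the max-plus analogue of Lemma~\ref{lemma:polymatrix}, but since polynomial ambiguity only constrains the abstract graph structure of the automaton, the argument carries over by swapping the minimum self-loop weight for the maximum.
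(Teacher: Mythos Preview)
Your proposal is correct and follows essentially the same approach as the paper's proof: the paper likewise reuses the idempotent refinement and run-abstraction machinery from Theorem~\ref{theorem:polynomial}, invokes the max-plus analogue of Lemma~\ref{lemma:polymatrix}, characterises decomposability via ``dominant'' runs (your sets $R_k$ are exactly the paper's $k$-dominant abstractions, and your $\bar R_k$ matches theirs), and concludes with the identical greedy covering argument. The only point you leave implicit is the trivial edge case $L=1$, where every $S_j$ is a singleton and hence automatically decomposable.
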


Before proving this lemma, we show how to use it on examples.

\begin{exa}\label{ex:polymax1}
Consider the function $g_5$ such that, for any $w$ of the form $w_0 \# w_1 \# \ldots \# w_n$ with $w_i \in \{a,b\}^*$ it computes $g_5(w) = \sum_{i=0}^n \max\{|w_i|_a, |w_i|_b\}$.
This is defined by $\mathcal{W}_5$ in Figure~\ref{fig:WA} if we change the semiring of the automaton to $\arctic$.
We show that $g_5$ cannot be expressed by a PA-WA.
Assume the contrary and let $N$ and $\varphi$ be the constant and the function from Theorem~\ref{thm:poly_max_plus}. Let $m$ be a number larger than $\varphi(2)$. We consider the refinements of the pumping representation $(\underline{a^N}\, \underline{b^N} \#)^m$. In the refinement, we refer to the $j$-th block of $a$'s as $j$ and to the $j$-th block of $b$'s as $j'$.
We define the sets in the partition as $S_j = \{j, j'\}$ for all $1
\leq j \leq m$. It is clear from the definition of the function $g_5$
that no set $S_j$ is decomposable, since only one of the blocks $j$
and $j'$ is relevant for the outer sum. Now, consider any selection
set $S$. Since any two elements of $S$  belong to different blocks of
the word separated by $\#$'s, $S$ is a decomposable set of
indices. Therefore,  the function $g_5$ is not polynomially ambiguous
over $\arctic$.
\end{exa}

\begin{exa}\label{ex:polymax2}
Consider the function $g_6$ that given a word of the form $w_0 \# w_1
\# \ldots \# w_n$ with $w_i \in \{a,b\}^*$  computes $g_6(w) =
\sum_{i=0}^n g_3(w_i)$, where $g_3$ from
Example~\ref{ex:max-a_to_b}.
We show now that $g_6$ cannot be expressed as a PA-WA over $\arctic$.
Assume the contrary and let $N$ and $\varphi$ be the constant and the function from the lemma above. Let $m$ be a number larger than $\varphi(2)$. We consider the refinements of the pumping representation $(\underline{b^N}\, \underline{a^N} \#)^m$.
Like in Example~\ref{ex:polymax1} in the refinement we refer to the $j$-th block of $b$'s as $j$ and to the $j$-th block of $a$'s as $j'$. Let $x_j$ and $y_j$ be the lengths of the block of $b$'s and the block of $a$'s, respectively.
We define the sets in the partition as $S_j = \{j, j'\}$ for all $1 \leq j \leq m$.
Every set $S_j$ is not decomposable since $\Delta(S_j) = \max(x_j,y_j)$.
Consider any selection set $S$ of $\pi$. It is easy to see that $S$ is decomposable given that all elements belong to different blocks. We conclude that $g_6$ cannot be defined by PA-WA over $\arctic$.
\end{exa}

\begin{proof}[Proof of Theorem~\ref{thm:poly_max_plus}]
The first part of the proof is the same as in the proof of Theorem~\ref{theorem:polynomial} which only depends on the (polynomial) ambiguity of the automaton, and not on the semiring.
We will use the same same notation for the refinement  $(y_k)_k$ and
in particular, all $D_k = M_{y_k}$ are idempotents.
We will also use the notation $\overline{\Run_\cA}(w)$ of abstractions of runs and the notation $\bar{\rho}: \{1, \ldots, n\} \to Q \times Q$.
Finally recall from the proof of Theorem~\ref{theorem:polynomial} that $|\overline{\Run_\cA}(w)| \leq P(n)$ for some polynomial $P(\cdot)$.

We will also reuse Lemma~\ref{lemma:polymatrix}, but over the
$\arctic$ semiring. One can easily check that this lemma continues to hold over the max-plus semiring. The difference is that then $c,d \in \arctic$.
Therefore, for every $k \in S$ let $b_{\bar{\rho}(k)}^k$, $c_{\bar{\rho}(k)}^k$ and $d_{\bar{\rho}(k)}^k$ be the constants from Lemma~\ref{lemma:polymatrix} such that:
$$
D_k^{b_{\bar{\rho}(k)}^k + i}[\bar{\rho}(k)] = c_{\bar{\rho}(k)}^k \cdot i + d_{\bar{\rho}(k)}^k
$$
 for $i \ge 0$. Since $\rho$ is accepting, we have $c_{\bar{\rho}(k)}^k, d_{\bar{\rho}(k)}^k \neq - \infty$.

First, we argue that the refinement defined by $(y_k)_k$ is \proper.
Fix a non-empty set $S \subseteq \{1, \ldots, n\}$.
For $\bar{\rho} \in \overline{\Run_\cA}(w)$ let $c_{\bar{\rho}} =
\sum_{k \in S} c_{\bar{\rho}(k)}^k$. Recall that every run in
$\Run_\cA(w(S,i))$ has some abstraction in $\overline{\Run_\cA}(w)$
and $\cA$ outputs the maximal value among all runs. It follows by
considering $i$ big enough that $\Delta(S) = \max\{c_{\bar{\rho}} \mid \bar{\rho} \in \overline{\Run_\cA}(w)\}$.

Let $k \in \{1, \ldots, n\}$. We say that $\overline{\rho} \in \overline{\Run_{\cA}(w)}$ is $k$-dominant if $c_{\bar{\rho}(k)}^k \ge c_{\bar{\sigma}(k)}^k$ for every $\bar{\sigma} \in \overline{\Run_\cA}(w)$. 
%
Given $k \in \{1, \ldots, n\}$ we define $\bar{R}_k =
\{\overline{\rho} \in \overline{\Run_\cA}(w) \mid \overline{\rho}
\text{ is not $k$-dominant}\}$.

\begin{clm}\label{clm2}
    Assume that we have a \proper\ refinement and $S \subseteq
    \{1,\ldots,n\}$ is a subset of indices of the refinement. Then $S$
    is \compatible\ if and only if there exists $\overline{\rho}$ such that for all $j \in S$, $\overline{\rho}[j]$ is $j$-dominant.
\end{clm}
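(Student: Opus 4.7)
The plan is to reduce Claim~\ref{clm2} to the basic max-plus identity relating $\max_{\bar{\rho}} \sum_{j}$ with $\sum_{j} \max_{\bar{\rho}}$, following the same scheme as Claim~\ref{clm} in the finitely ambiguous case but with the coefficients $c_{\bar{\rho}(j)}^j$ from Lemma~\ref{lemma:polymatrix} playing the role of the cycle weights. The starting point is the formula
\[
\Delta(S) \;=\; \max_{\bar{\rho} \in \overline{\Run_\cA}(w)} \sum_{k \in S} c_{\bar{\rho}(k)}^k,
\]
established just above the statement of the claim, together with its specialisation $\Delta(\{j\}) = \max_{\bar{\rho}} c_{\bar{\rho}(j)}^j$. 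With these, a run $\bar{\rho}$ is $j$-dominant precisely when $c_{\bar{\rho}(j)}^j = \Delta(\{j\})$.

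The heart of the argument is the trivial inequality
\[
\Delta(S) \;=\; \max_{\bar{\rho}} \sum_{j \in S} c_{\bar{\rho}(j)}^j \;\leq\; \sum_{j \in S} \max_{\bar{\rho}} c_{\bar{\rho}(j)}^j \;=\; \sum_{j \in S} \Delta(\{j\}),
\]
obtained by pushing the maximum past the sum. Hence the condition that $S$ be \compatible\ is exactly the statement that equality holds throughout, which in turn is equivalent to the existence of a single run $\bar{\rho}$ attaining the coordinate-wise maxima simultaneously.

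From here both directions of the claim are immediate. For the forward direction I would pick any $\bar{\rho}^*$ realising the outer maximum $\Delta(S)$; since $\sum_{j \in S} c_{\bar{\rho}^*(j)}^j = \sum_{j \in S} \Delta(\{j\})$ and each summand is bounded above by $\Delta(\{j\})$, every summand must attain that bound, so $\bar{\rho}^*$ is $j$-dominant for every $j \in S$. For the converse, given $\bar{\rho}$ that is $j$-dominant for every $j \in S$, summing the equalities $c_{\bar{\rho}(j)}^j = \Delta(\{j\})$ yields $\Delta(S) \geq \sum_{j \in S} \Delta(\{j\})$, which combined with the upper bound above shows that $S$ is \compatible.

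I do not foresee any real obstacle: all of the technical content has already been absorbed into Lemma~\ref{lemma:polymatrix} and the derivation of the formula for $\Delta(S)$, so Claim~\ref{clm2} is essentially a combinatorial repackaging of that formula in the language of dominance, mirroring the proof of Claim~\ref{clm}.
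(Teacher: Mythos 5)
Your proof is correct and is essentially the paper's own argument: the paper proves Claim~\ref{clm2} by saying it ``follows the same steps as the proof of Claim~\ref{clm}'', and those steps---picking a run maximising $\sum_{j\in S}c^j_{\bar{\rho}(j)}$ for the forward direction and summing the dominance equalities for the converse---are precisely what you carry out, cleanly packaged via the inequality $\max_{\bar{\rho}}\sum_{j\in S}c^j_{\bar{\rho}(j)}\leq\sum_{j\in S}\max_{\bar{\rho}}c^j_{\bar{\rho}(j)}$ and the formula $\Delta(S)=\max_{\bar{\rho}}\sum_{k\in S}c^k_{\bar{\rho}(k)}$ established just before the claim.
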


\begin{claimproof}
Follows the same steps as the proof of Claim~\ref{clm}.
\end{claimproof}

\medskip

\noindent By Claim~\ref{clm2}, $S$ is not decomposable if and only if $\bigcup_{k \in S} \bar{R}_k = \overline{\Run_\cA}(w)$.

\medskip

We are ready to prove the theorem.
Fix a partition $S_1, \ldots, S_m$ for some $m \geq \varphi(\max_l |S_l|)$.
Suppose the first condition is not true, namely, for every $j$, the set $S_j$ is not decomposable.
Let $L = \max_l |S_l|$. Since no set $S_l$ is decomposable we know that $L > 1$.
By the observations in the previous paragraph it suffices to construct a selection set $S$ such that $\bigcup_{k \in S} \bar{R}_k = \overline{\Run_\cA}(w)$, which will imply that $S$ is not decomposable.

The remaining part of the proof follows the same steps as the last part in the proof of Theorem~\ref{theorem:polynomial}.
To construct the selection set $S = \{k_1, \ldots, k_m\}$ we define by induction the sets $G_j$.
For every $j \in \{1, \ldots, m\}$ let $G_j = \overline{\Run}_\cA(w)
\setminus \bigcup_{l \leq j} \bar{R}_{k_l}$ (where $k_0$ is undefined,
so that $G_0 =
\overline{\Run}_\cA(w)$). Intuitively, $G_j$ correspond to runs that are not covered by the set $\{k_1, \ldots, k_j\}$.
For the inductive case, suppose that $j \geq 0$ and $G_j \neq \emptyset$.
Since $\bigcup_{k \in S_{j+1}} \bar{R}_k = \overline{\Run}_\cA(w)$,
by the pigeonhole principle there exist $k_{j+1} \in S_{j+1}$ such that $|\bar{R}_{k_{j+1}} \cap G_j| \geq |G_j|/|S_{j+1}|$. We add $k_{j+1}$ to $S$ and so
$|G_{j+1}| \leq |G_j| - |G_j|/|S_{j+1}| = |G_j|\cdot (|S_{j+1}| -1 )/|S_{j+1}| \leq |G_j|\cdot (L -1)/L$.
Suppose this procedure continues until $j = m$ and $G_{m} \neq \emptyset$. Then $1 \le |\overline{\Run}_\cA(w))|\cdot ((L-1)/L)^m$, and $|\overline{\Run}_\cA(w)| \ge (L/(L-1))^m$. However, we know that $|\overline{\Run}_\cA(w)|$ is bounded by a polynomial function $P(n)$ depending on $|\cA|$. Thus, it suffices to choose  $\varphi$ such that $m \geq \varphi(L)$ implies $(L/(L-1))^{m} > P(L\cdot m) \geq P(n) \ge |\overline{\Run}_\cA(w)|$ (recall that $S_1, \ldots, S_m$ is a partition of $\{1, \ldots, n\}$ and $L \cdot m \geq n$). Therefore, $G_m = \emptyset$ and thus $\bigcup_{k \in S} \bar{R}_k = \overline{\Run}_\cA(w)$, which concludes the proof.
\end{proof}

\section{Conclusion}
\label{sec:conclusions}
We have shown five pumping lemmas for five different classes of functions. We believe that the pumping lemmas in Section~\ref{sec:pumping} and in Section~\ref{sec:max} could be proved for a wider class of functions that would contain the class $\natsemiring$, but this is left for future work.
As a corollary of our results, we showed that recognisable functions over $\trop$ and $\arctic$ form a strict hierarchy, namely:
$$
\text{U-WA} \ \subsetneq \ \text{FA-WA} \ \subsetneq \ \text{PA-WA} \ \subsetneq \ \text{WA}.
$$
All strict inclusions, except for PA-WA $\subsetneq$ WA, could be extracted from the analysis of examples in~\cite{KlimannLMP04}. However, our results provide a general machinery to prove such results.

\paragraph*{Acknowledgments}
We thank Shaull Almagor and the anonymous referees of STACS 2018 and LMCS for their helpful comments.


\bibliographystyle{alpha}
\bibliography{biblio}

%


\end{document}